\def\namedlabel#1#2{\begingroup
    #2%
    \def\@currentlabel{#2}%
    \phantomsection\label{#1}\endgroup
}
\newcommand{\N}{\mathbb{N}}
\newcommand{\R}{\mathbb{R}}
\newcommand{\mc}{\mathcal}
\newcommand{\mf}{\mathbf}
\DeclareMathAlphabet{\mathpzc}{OT1}{pzc}{m}{it}
\newcommand{\vknote}[1]{\textnormal{{\color{blue} \textsf{[VK]: {#1}}}}}
\renewcommand{\vknote}[1]{}
\newcommand{\notetoself}[1]{\textnormal{{\color{red} \textsf{[N2S]: {#1}}}}}
\renewcommand{\notetoself}[1]{}
\providecommand{\corollaryname}{Corollary}
\providecommand{\claimname}{Claim}
\providecommand{\definitionname}{Definition}
\providecommand{\lemmaname}{Lemma}
\providecommand{\notationname}{Notation}
\providecommand{\remarkname}{Remark}
\providecommand{\problemname}{Problem}
\providecommand{\propositionname}{Proposition}
\providecommand{\examplename}{Example}
\providecommand{\exercisename}{Exercise}
\providecommand{\theoremname}{Theorem}
\providecommand{\conjecturename}{Conjecture}
\newtheorem{theorem}{\protect\theoremname}[section]
\newtheorem{definition}[theorem]{\protect\definitionname}
\newtheorem{lemma}[theorem]{\protect\lemmaname}
\newtheorem{example}[theorem]{\protect\examplename}
\newtheorem{exercise}[theorem]{\protect\exercisename}
\newtheorem{corollary}[theorem]{\protect\corollaryname}
\newtheorem{remark}[theorem]{\protect\remarkname}
\newtheorem{notation}[theorem]{\protect\notationname}
\newtheorem{conjecture}[theorem]{\protect\conjecturename}
\newlength{\nodesize}
\colorlet{chance_color}{black}
\colorlet{pl0_color}{chance_color}
\colorlet{chance_text}{white}
\colorlet{pl1_color}{magenta!50}
\colorlet{pl2_color}{cyan!50}
\colorlet{pl0_infoset_color}{pl0_color}
\colorlet{pl1_infoset_color}{magenta}
\colorlet{pl2_infoset_color}{cyan}
\tikzset{
	partition/.style = {
		draw,
		rounded corners = 5,
		inner sep=0.1\nodesize,
	},
	infoset/.style = {
		partition,
		draw=pl1_infoset_color,
	},
	augmented/.style = {
		dashed
	},
	opponent/.style = {
		draw=pl2_infoset_color,
		inner sep=0.225\nodesize,
	},
	pl1_cl_infoset/.style = {infoset, yshift=-0.035\nodesize},
	pl2_cl_infoset/.style = {infoset,
		opponent,
		inner sep=0.1\nodesize,
		yshift=0.035\nodesize
	},
	public_state/.style = {
		partition,
		draw=pl0_infoset_color,
		inner sep=0.125\nodesize,
		rounded corners = 7,
	},
}
\tikzset{
	line_infoset/.style = {
		draw=pl1_infoset_color,
		dashed
	},
	opp_line_infoset/.style = {
		draw=pl2_infoset_color,
		dashed
	}
}
\newcommand{\corners}[1]{#1.corner 1)(#1.corner 2)(#1.corner 3}
\colorlet{chance_color}{black}
\colorlet{pl0_color}{chance_color}
\colorlet{chance_text}{white}
\tikzset{
	basenode/.style = {draw,
		inner sep = 0,
		minimum size = \nodesize
	},
	playernode/.style={basenode, 
		shape = regular polygon,
		regular polygon sides = 3
	},
	pl1/.style={playernode, fill=pl1_color},
	pl2/.style={playernode, fill=pl2_color, shape border rotate=180},
	chance/.style = {basenode,
		fill=pl0_color, text=chance_text,
		circle,
		minimum size=0.7*\nodesize,
	},
	terminal/.style = {basenode,
		draw=none,
		outer sep=0,
		minimum size = 0.6\nodesize
	}
}
\NewDocumentCommand{\IS}{ m m O{0.6} O{0.6} O{0.6} O{5} O{dashed} O{} O{1} }{
	\draw [#7, rounded corners=#6, draw=pl#9_color!75!black]
		($(#1)+(-#5\nodesize,#3\nodesize)$) rectangle
		($(#2)+(#5\nodesize,-#4\nodesize)$);
	\node at ($(#1)!.5!(#2)$) {#8};
}
\NewDocumentCommand{\oneNodeIS}{ m O{0.6} O{0.6} O{0.6} O{5} O{dashed} O{} O{1} }{
	\IS{#1}{#1}[#2][#3][#4][#5][#6][#7][#8]
}
\NewDocumentCommand{\actIS}{ m m O{0.6} O{0.6} O{0.6} O{5} O{} O{1} }{
	\IS{#1}{#2}[#3][#4][#5][#6][solid][#7][#8]
}
\NewDocumentCommand{\oppActOneNodeIS}{ m O{0.6} O{0.6} O{0.6} O{5} O{solid} O{} }{
	\IS{#1}{#1}[#2][#3][#4][#5][#6][#7][2]
}
\NewDocumentCommand{\oppIS}{ m m O{0.6} O{0.6} O{0.6} O{5} O{dashed} O{} }{
	\IS{#1}{#2}[#3][#4][#5][#6][#7][#8][2]
}
\NewDocumentCommand{\publicState}{ m m O{0.6} O{0.6} O{0.6} O{0} O{}}{
	\IS{#1}{#2}[#3][#4][#5][#6][solid][#7][0]
}
\tikzset{
	level 1/.style = {level distance=2\nodesize, sibling distance=3\nodesize},
	level 2/.style = {level distance=2\nodesize, sibling distance=1.5\nodesize}
}
\def\infosetDeprecated[#1,#2,#3](#4,#5)(#6,#7,#8)(#9){	
	\draw [#1, rounded corners=#2, draw=pl#3_color!75!black]
		($(#4)+(-#6\nodesize,#7\nodesize)$) rectangle
		($(#5)+(#6\nodesize,-#8\nodesize)$);
	\node at ($(#4)!.5!(#5)$) {#9};
}
\def\actInfosetDeprecated(#1,#2){
	\infosetDeprecated[solid,5,1](#1,#2)(0.6,0.6,0.5)()
}
\def\augInfosetDeprecated(#1,#2){
	\infosetDeprecated[dashed,5,1](#1,#2)(0.6,0.5,0.6)()
}
\def\sneakingGamePartition{
\setlength{\nodesize}{2.5em}
\tikzset{
	level 1/.style = {level distance=2\nodesize, sibling distance=3\nodesize},
	level 2/.style = {level distance = 1.5\nodesize, sibling distance=1.5\nodesize},
	level 3/.style = {level distance = 1.5\nodesize},
	level 4/.style = {level distance = 1.75\nodesize}
}
\node(root) at ($(middle)+(-\linewidth/3+1.35em,0)$) [chance,label=right:{}]{}
	child{node(H1)[pl1]{}
		child{node{}
		edge from parent[draw=none]{}
			child{node(H2)[pl1]{}
				edge from parent[draw=none]{}
				child{node[terminal]{$\frac{1}{2}$}
					edge from parent node[left]{slow}
				}							
				child{node[terminal]{$1$}
					edge from parent node[right]{quick}				
				}
			}
		}
		edge from parent node[right]{$0.5$}
		edge from parent node[left, yshift=0.5em, align=left]{high \\ skill}
	}
	child{node(L1)[pl1]{}
		child{node(N)[pl2,label={[xshift=2.5em, yshift=-.0em]}]{}
			child[grow=-45]{node[terminal]{0}
				edge from parent node[above right]{run}
			}
			child[grow=down]{node(L2)[pl1]{}
				child{node[terminal]{$\frac{1}{2}$}
					edge from parent node[left]{slow}
				}							
				child{node[terminal]{$-1$}
					edge from parent node[right]{quick}				
				}
				edge from parent node[above left]{ambush}
			}
			edge from parent node[left]{sneak}
		}
		edge from parent node[left]{$0.5$}
		edge from parent node[xshift=1.5em, yshift=0.5em, align=left]{low \\ skill}	}
;
\draw (H1)--(H2);
\node at ($(H1)!.25!(H2)+(1.5em,0em)$) {sneak};

\draw[dashed, bend left, color=pl1_color!50!black] (H1) to (L1);
\draw[dashed, bend left, color=pl1_color!50!black] (H2) to (L2);
}
\def\sneakingGameModification{
\tikzset{
	level 1/.style = {level distance=1.25\nodesize, sibling distance=1.5\nodesize},
	level 2/.style = {level distance = 1.25\nodesize, sibling distance=.75\nodesize},
	level 3/.style = {level distance = 1.25\nodesize},
	level 4/.style = {level distance = 1.25\nodesize}
}
\node(rootB) at ($(middle)+(\linewidth/3+0.8em,-2.5em)$) [chance,label=right:{}] {}
	child{node(H1B)[pl1]{}
		child{node(dummyB)[chance]{}
			child{node(H2B)[pl1]{}
				child{node[terminal]{}
					edge from parent node[left]{}
				}							
				child{node[terminal]{}
					edge from parent node[right]{}				
				}
			}
		}
		edge from parent node[right]{}
		edge from parent node[above left]{}
	}
	child{node(L1B)[pl1]{}
		child{node(NB)[pl2,label={[xshift=2.2em, yshift=-.5em]}]{}
			child[grow=-45]{node[terminal]{}
				edge from parent node[right]{}
			}
			child[grow=down]{node(L2B)[pl1]{}
				child{node[terminal]{}
					edge from parent node[left]{}
				}							
				child{node[terminal]{}
					edge from parent node[right]{}				
				}
				edge from parent node[above left]{}
			}
			edge from parent node[left]{}
		}
		edge from parent node[left]{}
		edge from parent node[above right]{}
	}
;

\draw[dashed, color=pl1_color!50!black] (rootB) circle (.55\nodesize);
\draw [dashed, rounded corners=10, draw=pl1_color!50!black]
	($(H1B)+(-0.6\nodesize,.6\nodesize)$) rectangle
	($(L1B)+(0.6\nodesize,-0.5\nodesize)$);
\draw [dashed, rounded corners=10, draw=pl1_color!50!black]
	($(dummyB)+(-0.6\nodesize,.6\nodesize)$) rectangle
	($(NB)+(0.6\nodesize,-0.55\nodesize)$);
\draw [dashed, rounded corners=10, draw=pl1_color!50!black]
	($(H2B)+(-0.6\nodesize,.6\nodesize)$) rectangle
	($(L2B)+(0.6\nodesize,-0.5\nodesize)$);
\node(rootC) at ($(middle)+(2em,-2.5em)$) [chance,label=right:{}]{}
	child{node(H1C)[pl1]{}
		child{node{}
		edge from parent[draw=none]{}
			child{node(H2C)[pl1]{}
				edge from parent[draw=none]{}
				child{node[terminal]{}
					edge from parent node[left]{}
				}							
				child{node[terminal]{}
					edge from parent node[right]{}				
				}
			}
		}
		edge from parent node[right]{}
		edge from parent node[above left]{}
	}
	child{node(L1C)[pl1]{}
		child{node(NC)[pl2,label={[xshift=2.5em, yshift=-.0em]}]{}
			child[grow=-45]{node[terminal]{}
				edge from parent node[right]{}
			}
			child[grow=down]{node(L2C)[pl1]{}
				child{node[terminal]{}
					edge from parent node[left]{}
				}							
				child{node[terminal]{}
					edge from parent node[right]{}				
				}
				edge from parent node[above left]{}
			}
			edge from parent node[left]{}
		}
		edge from parent node[left]{}
		edge from parent node[above right]{}
	}
;
\draw (H1C)--(H2C);

\draw[dashed, color=pl1_color!50!black] (rootC) circle (.55\nodesize);
\draw [dashed, rounded corners=10, draw=pl1_color!50!black]
	($(H1C)+(-0.6\nodesize,.6\nodesize)$) rectangle
	($(L1C)+(0.6\nodesize,-0.5\nodesize)$);
\draw [dashed, rounded corners=10, draw=pl1_color!50!black]
	($(NC)+(-0.6\nodesize,.5\nodesize)$) rectangle
	($(NC)+(0.6\nodesize,-0.55\nodesize)$);
\draw [dashed, rounded corners=10, draw=pl1_color!50!black]
	($(H2C)+(-0.6\nodesize,.6\nodesize)$) rectangle
	($(L2C)+(0.6\nodesize,-0.5\nodesize)$);
\draw [->,
line join=round,
decorate, decoration={
    snake,
    segment length=4,
    amplitude=5.0,post=lineto,
    post length=2pt }]
	($(NC)+(.75\nodesize,0.3\nodesize)$) -- ($(dummyB)+(-.75\nodesize,0.3\nodesize)$);
}
\def\noFinestPartitionExample{
\tikzset{
	level 1/.style = {level distance=0\nodesize, sibling distance=2.1\nodesize},
	level 2/.style = {level distance=1.25\nodesize},
	level 3/.style = {level distance=1.5\nodesize},
	level 4/.style = {level distance = 1.75\nodesize},
	level 5/.style = {level distance = 0.9\nodesize}
}
\node(anchor){}
child{node(A)[pl1]{}
	edge from parent[draw=none]
	child{node[pl2]{}
		child{node[pl2]{}
			child{node[missing]{}
				child{node[pl1]{}
					edge from parent[draw=none]
				}			
				edge from parent[draw=none]
			}
		}
	}
}
child{node(B)[pl1, xshift=-0.8\nodesize]{}
	edge from parent[draw=none]
	child{node[pl2]{}
		child{node[pl2]{}
			child[grow=down]{node[missing]{}
				child{node[pl1]{}
					edge from parent[draw=none]
				}			
				edge from parent[draw=none]
			}
			child[grow=south east]{node[pl2,xshift=0.125\nodesize]{}
				child[grow=down, level distance = 1.41\nodesize] {node[pl1]{}}
			}
		}
	}
}
child{node(C)[pl1, xshift=0.8\nodesize]{}
	edge from parent[draw=none]
	child{node[pl2]{}
		child{node[pl2]{}
			child[grow=down]{node[missing]{}
				child{node[pl1]{}
					edge from parent[draw=none]
				}			
				edge from parent[draw=none]
			}
			child[grow=south west]{node[pl2, xshift=-0.125\nodesize]{}
				child[grow=down, level distance = 1.41\nodesize] {node[pl1]{}}
			}
		}			
	}
}
child{node(D)[pl1]{}
	edge from parent[draw=none]
	child{node[pl2]{}
		child{node[pl2]{}
			child{node[missing]{}
				child{node[pl1]{}
					edge from parent[draw=none]
				}			
				edge from parent[draw=none]
			}
		}
	}
};

\foreach \L in {A,B,...,D}{
	\draw (\L-1-1) -- (\L-1-1-1-1);
	\draw[dotted] (\L) -- ($(\L)+(0,1.2\nodesize)$);
	\draw[dotted] (\L-1-1-1-1) -- ($(\L-1-1-1-1)-(0,0.9\nodesize)$);
}
\draw[dotted] (B-1-1-2-1) -- ($(B-1-1-2-1)-(0,0.9\nodesize)$);
\draw[dotted] (C-1-1-2-1) -- ($(C-1-1-2-1)-(0,0.9\nodesize)$);

\actInfosetDeprecated(A,D)
\actInfosetDeprecated(A-1-1-1-1,B-1-1-1-1)
\actInfosetDeprecated(C-1-1-1-1,D-1-1-1-1)
\actInfosetDeprecated(B-1-1-2-1,C-1-1-2-1)
}
\def\noFinestPartitionExampleA{
\infosetDeprecated[dashed,5,1](A-1,D-1)(0.65,0.5,0.6)()
\infosetDeprecated[dashed,5,1](A-1-1,D-1-1)(0.65,0.5,0.6)()
\infosetDeprecated[dashed,5,1](B-1-1-2,C-1-1-2)(0.65,0.5,0.6)()
}
\def\noFinestPartitionExampleB{
\draw [dashed,rounded corners, color=pl1_color!75!black]
	($(A-1)+(-0.6\nodesize,0.5\nodesize)$)--($(D-1)+(0.6\nodesize,0.5\nodesize)$)--($(D-1-1)+(0.6\nodesize,-0.8\nodesize)$)--($(C-1-1)+(-0.55\nodesize,-0.8\nodesize)$)--($(C-1)+(-0.55\nodesize,-0.6\nodesize)$)--($(A-1)+(-0.6\nodesize,-0.6\nodesize)$)--cycle;
\infosetDeprecated[dashed,5,1](A-1-1,C-1-1-2)(0.65,0.5,0.55)()
}
\def\ISOfail{
\tikzset{
	level 1/.style = {level distance=0.6\nodesize, sibling distance=5\nodesize},
	level 2/.style = {level distance=1\nodesize,sibling distance=2.5\nodesize},
	level 3/.style = {level distance=1.35\nodesize, sibling distance=1.25\nodesize},
	level 4/.style = {level distance=0.75\nodesize},
}

\node(root)[pl1]{}
	child foreach \i in {1,2}{node[chance]{}
		child foreach \j in {1,2}{node[pl2]{}
			child foreach \k in {1,2}{node[pl1]{}
				child{node[missing]{} 
					edge from parent[dotted]
				}
			}
		}
	}
;
\actInfosetDeprecated(root,root)
\foreach \i in {1,2}{
	\foreach \j in {1,2}{
		\foreach \k in {1,2}{
			\actInfosetDeprecated(root-\i-\j-\k,root-\i-\j-\k)
		}
	}
}
\foreach \i in {1,2}{
	\draw [dashed,rounded corners=10, color=pl1_color!75!black]
	($(root-\i)+(-0.45\nodesize,0.5\nodesize)$)--
	($(root-\i)+(0.45\nodesize,0.5\nodesize)$)--
	($(root-\i-2)+(0.6\nodesize,0.5\nodesize)$)--
	($(root-\i-2)+(0.6\nodesize,-0.6\nodesize)$)--
	($(root-\i-1)+(-0.6\nodesize,-0.6\nodesize)$)--
	($(root-\i-1)+(-0.6\nodesize,0.5\nodesize)$)--cycle;
}
}
\def\perfRecallGame[#1]{
\tikzset{
	level 1/.style = {level distance=#1\nodesize, sibling distance=5\nodesize},
	level 2/.style = {level distance=1.25\nodesize,sibling distance=2.5\nodesize},
	level 3/.style = {level distance=1.35\nodesize, sibling distance=1.25\nodesize},
	level 4/.style = {level distance=0.75\nodesize},
}
\node(R)[pl2]{}
	child foreach[count=\i] \actorI/\actorII in {chance/pl1, pl1/pl2}{node[\actorI]{}
		child foreach \j in {1,2}{node[\actorII]{}
			child foreach \k in {1,2}{
				node[terminal]{$u_{\i\j\k}$}
			}
		}
	}
;
}
\def\perfRecallInfosets{
\node[xshift=0.75\nodesize, yshift=0.25\nodesize] at (R) {X};
\foreach \i/\O/\A/\B/\C in {1/W/a/c/e,2/Z/b/d/f}{
	\node[xshift=0.0\nodesize, yshift=0.75\nodesize] at (R-\i) {X};
	\node[xshift=0.75\nodesize, yshift=0.0\nodesize] at (R-1-\i) {Y};
	\node[xshift=0.75\nodesize, yshift=0.0\nodesize] at (R-2-\i) {Z};

	\draw [decorate, decoration = {brace, amplitude=0.5\nodesize, mirror, raise=0.2\nodesize}]
		($(R-\i-1-1)+(-0.35\nodesize,-0.05\nodesize)$) --
		($(R-\i-2-2)+(.35\nodesize,-0.05\nodesize)$);
	\node at ($(R-\i-1-1)!.5!(R-\i-2-2)+(0,-1.1\nodesize)$) {\O};
	
	\node[xshift=0.0\nodesize, yshift=0.35\nodesize] 
		at ($(R-2)!.6!(R-2-\i)$) {\A};
	\node[xshift=0.0\nodesize, yshift=0.5\nodesize] 
		at ($(R-1-1)!.9!(R-1-1-\i)$) {\B};
	\node[xshift=0.0\nodesize, yshift=0.5\nodesize] 
		at ($(R-1-2)!.9!(R-1-2-\i)$) {\B};		
}
}
\def\perfRecallInfosetsA{
\draw [dashed,rounded corners=8, color=pl1_color!75!black]
	($(R)+(-0.45\nodesize,0.5\nodesize)$)--
	($(R)+(0.45\nodesize,0.5\nodesize)$)--
	($(R-2)+(0.6\nodesize,0.5\nodesize)$)--
	($(R-2)+(0.6\nodesize,-0.5\nodesize)$)--
	($(R-1)+(-0.6\nodesize,-0.5\nodesize)$)--
	($(R-1)+(-0.6\nodesize,0.4\nodesize)$)--cycle;
\infosetDeprecated[dashed,8,1](R-1-1,R-1-2)(.6,.6,.5)()
\infosetDeprecated[dashed,6,1](R-1-1-1,R-1-2-2)(.55,.5,.5)()
\draw [dashed,rounded corners=8, color=pl1_color!75!black]
	($(R-2-1)+(-0.45\nodesize,0.5\nodesize)$)--
	($(R-2-2)+(0.45\nodesize,0.5\nodesize)$)--
	($(R-2-2-2)+(0.6\nodesize,0.5\nodesize)$)--
	($(R-2-2-2)+(0.6\nodesize,-0.5\nodesize)$)--
	($(R-2-1-1)+(-0.55\nodesize,-0.5\nodesize)$)--
	($(R-2-1-1)+(-0.55\nodesize,0.4\nodesize)$)--cycle;
\node[xshift=3.25\nodesize, yshift=.75\nodesize] at ($(R-1)!.5!(R-2)$) {X};
\node[xshift=2.35\nodesize, yshift=.25\nodesize] at ($(R-2-1)!.5!(R-2-2)$) {Z};
\node[xshift=0\nodesize, yshift=.0\nodesize] at ($(R-1-1)!.5!(R-1-2)$) {Y};
\node[xshift=2.35\nodesize, yshift=.8\nodesize] at ($(R-1-1-1)!.5!(R-1-2-2)$) {W};

\node[below=1.9\nodesize,text=white] at (R-1-1) {\footnotesize{(X,X,Y,c,W)}};
}
\def\perfRecallInfosetsB{
\infosetDeprecated[dashed,10,1](R,R)(.55,.55,.6)()
\infosetDeprecated[dashed,10,1](R-1,R-2)(.55,.6,.55)(\footnotesize{{(X,X)}})
\infosetDeprecated[dashed,10,1](R-1-1,R-1-2)(.55,.6,.55)(\footnotesize{{(X,X,Y)}})
\infosetDeprecated[dashed,10,1](R-2-2,R-2-2)(.55,.55,.6)()
\infosetDeprecated[dashed,10,1](R-2-1,R-2-1)(.55,.55,.6)()
\infosetDeprecated[dashed,10,1](R-2-1-1,R-2-1-2)(.55,.55,.6)()
\infosetDeprecated[dashed,10,1](R-2-2-1,R-2-2-2)(.55,.55,.6)()
\draw [dashed, draw=pl1_color!75!black, bend left] (R-1-1-1) to (R-1-2-1);
\draw [dashed, draw=pl1_color!75!black, bend right] (R-1-1-2) to (R-1-2-2);

\node[xshift=1.25\nodesize, yshift=0.00\nodesize] at (R) {\footnotesize{(X)}};
\node[xshift=1.75\nodesize, yshift=0.00\nodesize] at (R-2-2) {\footnotesize{(X,X,b,Z)}};
\node(O)[xshift=1.\nodesize, yshift=1.25\nodesize] at (R-2-2) {\footnotesize{(X,X,a,Z)}};
\draw[dotted] ($(O)-(0.9\nodesize,0.25\nodesize)$)--(R-2-1);
\node[below=1.9\nodesize] at (R-2-2) {\footnotesize{(X,X,b,Z,Z)}};
\node[below=1.9\nodesize] at (R-2-1) {\footnotesize{(X,X,a,Z,Z)}};
\node[below=1.9\nodesize] at (R-1-2) {\footnotesize{(X,X,Y,c,W)}};
\node[below=1.9\nodesize] at (R-1-1) {\footnotesize{(X,X,Y,c,W)}};
}
\def\noFinestPartitionTrivial{
\tikzset{
	level 1/.style = {level distance=0\nodesize, sibling distance=1.5\nodesize},
	level 2/.style = {level distance=6\nodesize},
}
\node(anchor){}
child{node(A)[pl1]{}
	edge from parent[draw=none]
	child{node[pl1]{}}
}
child{node(B)[pl1]{}
	edge from parent[draw=none]
	child{node[pl1]{}}
};

\foreach \L in {A,B}{
	\draw[dotted] (\L) -- ($(\L)+(0,1.2\nodesize)$);
	\draw[dotted] (\L-1) -- ($(\L-1)-(0,0.9\nodesize)$);
}

\node(UL)[pl2] at ($(A)!0.25!(A-1)$) {};
\node(M)[pl2] at ($(A)!0.5!(A-1)$) {};
\node(BL)[pl2] at ($(A)!0.75!(A-1)$) {};
\node(UR)[pl2] at ($(B)!0.25!(B-1)$) {};
\node(BR)[pl2] at ($(B)!0.75!(B-1)$) {};

\actInfosetDeprecated(A,B)
\actInfosetDeprecated(A-1,B-1)
}
\def\noFinestPartitionGeneral{
\tikzset{
	level 1/.style = {level distance=0\nodesize, sibling distance=2.5\nodesize},
	level 2/.style = {level distance=3\nodesize},
	level 3/.style = {level distance=1.5\nodesize},
}
\node(anchor){}
	child foreach \L in {A,B}{ node(\L)[pl2]{}
		child{node[pl2]{}
			child{node[pl1]{}}
		}
		edge from parent[draw=none]{}
	}
;
\foreach \L in {A,B}{
	\draw (\L) -- ($(\L)+(0,2\nodesize)$);
	\draw[dotted] ($(\L)+(0,2\nodesize)$) -- ($(\L)+(0,2.55\nodesize)$);
	\draw[dotted] (\L-1-1) -- ($(\L-1-1)-(0,1.1\nodesize)$);
}
\actIS{A-1-1}{B-1-1}[0.7]
}
\def\noFinestPartitionInfosetsA{
\oneNodeIS{A}
\oneNodeIS{A-1}
\oneNodeIS{B}
\oneNodeIS{B-1}
\foreach \L/\O in {A/{A,B}, A-1/{C,D,E}, A-1-1/{F}, B/{A}, B-1/{B,C,D,E}, B-1-1/{F}}{
	\node[right=0.5\nodesize, child anchor=east] at (\L) {\footnotesize{\O}};
}
}
\def\noFinestPartitionInfosetsB{
\node(X)[chance] at ($(A)+(0,1.75\nodesize)$){};
\node(Y)[chance] at ($(B-1)+(0,1.25\nodesize)$){};
\draw[rounded corners=10, dashed, color=pl1_color!75!black, rotate around={-35:($(X)!.5!(B)$)}]
	($(X)!.5!(B)+(-2.1\nodesize,0.6\nodesize)$) rectangle ($(X)!.5!(B)+(2.1\nodesize,-0.6\nodesize)$);
\draw[rounded corners=10, dashed, color=pl1_color!75!black, rotate around={-35:($(A)!.5!(Y)$)}]
	($(A)!.5!(Y)+(-2.1\nodesize,0.6\nodesize)$) rectangle ($(A)!.5!(Y)+(2.1\nodesize,-0.575\nodesize)$);
\IS{A-1}{B-1}[0.55]
\foreach \L/\O in {X/A, A/B, A-1/{C,D,E}, A-1-1/{F}, B/A, Y/B, B-1/{C,D,E}, B-1-1/{F}}{
	\node[right=0.5\nodesize, child anchor=east] at (\L) {\footnotesize{\O}};
}
}
\def\noFinestPartitionArrow{
\node(anchor){};
\draw [->,
line join=round,
decorate, decoration={
    snake,
    segment length=15,
    amplitude=10.0,post=lineto,
    post length=5pt }]
	(-1.35\nodesize,4.5\nodesize) -- (1.35\nodesize,4.5\nodesize);
}
\def\pokerHistory{
\tikzset{
	level 1/.style = {level distance=1.25\nodesize, sibling distance=1.5\nodesize},
	level 2/.style = {level distance=1.05\nodesize, sibling distance=1.5\nodesize},
	level 3/.style = {level distance=1.25\nodesize, sibling distance=1.5\nodesize},		
}

\node(R0)[chance]{}
\foreach \round in {1,2,3}{
	child{node[pl1]{}
		child{node[pl2]{}
			child{node(R\round)[chance]{}}
		}
	}
}
	child{node[pl1]{}
		child{node[pl2]{}
			child{node(R4)[terminal]{end}}
		}
	}
;
}
\def\pokerA{
\foreach \i in {0,1,2,3}{
	\draw[dashed] (R\i-1)--($(R\i-1)+(-.45\nodesize,-.6\nodesize)$);
	\draw[dashed] (R\i-1)--($(R\i-1)+(.45\nodesize,-.6\nodesize)$);
	\draw[dashed] (R\i-1-1)--($(R\i-1-1)+(-.45\nodesize,-.6\nodesize)$);
	\draw[dashed] (R\i-1-1)--($(R\i-1-1)+(.45\nodesize,-.6\nodesize)$);
	\draw[<-, dotted, bend left] ($(R\i.north east)+(0.05\nodesize,0.05\nodesize)$) to ($(R\i.north)+(1.0\nodesize,0)$);
}
\node[right=1.0\nodesize, align=left] at (R0) {private cards\\ are dealt};
\node[right=1.0\nodesize, align=left] at (R1) {1st round of\\ public cards};
\node[right=1.0\nodesize, align=left] at (R2) {2nd round of\\ public cards};
\node[right=1.0\nodesize, align=left] at (R3) {last round of\\ public cards};
\node at ($(R4)+(0,-0.25\nodesize)$) {};
}
\def\pokerB{
\oneNodeIS{R0}[.5][0.5]
\foreach \i/\iplusplus in {0/1, 1/2, 2/3, 3/4}{
	\actIS{R\i-1}{R\i-1}[.6][0.5]
	\IS{R\i-1-1}{R\iplusplus}[.45][0.45]
}
}
\def\pokerC{
\foreach \i in {0,1,2,3}{
	\oppIS{R\i}{R\i-1}[.5][0.45]
	\oppActOneNodeIS{R\i-1-1}[.5][0.6]
}
\oppActOneNodeIS{R4}[.5][0.45][0.6][5][dashed]
}
\def\pokerD{
\publicState{R0}{R0-1}[.5][0.45]
\foreach \i/\iplusplus in {0/1, 1/2, 2/3}{
	\publicState{R\i-1-1}{R\iplusplus-1}[.5][0.45]
}
\publicState{R3-1-1}{R4}[.5][0.45]
\foreach \i/\description in {0/{pre-flop},1/flop,2/turn,3/river}{
	\draw [decorate, decoration = {brace, amplitude=0.5\nodesize, raise=0.7\nodesize}]
		($(R\i)+(0,0.25\nodesize)$) --
		($(R\i-1-1)+(0,-0.45\nodesize)$);
	\node[anchor=west] at ($(R\i)!.5!(R\i-1-1)+(1.25\nodesize,-0.05\nodesize)$) {\description};
}
}
\def\thickInfosets{
\tikzset{
	level 1/.style = {level distance=1.3\nodesize, sibling distance=2.5\nodesize},
	level 2/.style = {level distance=1.5\nodesize, sibling distance=1.\nodesize},
	level 3/.style = {level distance=1.5\nodesize, sibling distance=1.\nodesize}
}
\node(R)[chance]{}
	child{node[pl2]{}
		child[level distance=3\nodesize, sibling distance=2\nodesize]{node[pl1]{}}
		child[sibling distance=1\nodesize]{node[pl2]{}
			child{node[pl1]{}}
			child{node[pl1]{}}
		}
	}
	child{node[pl2]{}
		child[level distance=3\nodesize]{node[pl1]{}}
		child[level distance=3\nodesize]{node[pl1]{}}
	}
	child[, sibling distance=1.8\nodesize]{node[pl2]{}
		child[grow=down, level distance=3\nodesize]{node[pl1]{}}
		child[sibling distance=1.9\nodesize]{node[pl2]{}
			child{node[pl1]{}}
		}
	}
;
\IS{R-1}{$(R-1-2)!.4!(R-3-2)$}
\IS{R-3}{R-3-2}
\oneNodeIS{R}
\actIS{R-1-1}{R-2-2}[0.65][0.5]
\actIS{R-3-1}{R-3-2-1}[0.65][0.5]
}
\def\unfairMP{
\node(R)[chance]{}
	child foreach \i in {1,2}{
		node[pl\i]{}
		child foreach \j in {1,2}{
			node{}
			child foreach \k in {1,2}{
				node[terminal]{}
			}
		}
	}
;
\foreach \i/\player in {1/2, 2/1}{
	\foreach \j in {1,2}{
		\node[pl\player] at (R-\i-\j) {};
	}
}
\foreach \coords/\payoff in {1-1-1/1, 1-1-2/-1, 1-2-1/-1, 1-2-2/1, 2-1-1/-1, 2-1-2/1, 2-2-1/1, 2-2-2/-1}{
	\node at (R-\coords) {\payoff};
}
}
\tikzset{
	level 1/.style = {level distance=1.5\nodesize, sibling distance=5\nodesize},
	level 2/.style = {level distance=2.25\nodesize, sibling distance=2\nodesize},
	level 3/.style = {level distance=1.5\nodesize, sibling distance=1.\nodesize}
}
\def\unfairMPtext{
\draw[rounded corners=15, solid, color=pl1_color!75!black, rotate around={-30:($(R-1)!.5!(R-2-1)$)}]
	($(R-1)!.5!(R-2-1)+(-3\nodesize,0.6\nodesize)$) rectangle
	($(R-1)!.5!(R-2-1)+(3\nodesize,-0.6\nodesize)$);
\draw[rounded corners=15, solid, color=pl2_color!75!black, rotate around={30:($(R-2)!.5!(R-1-2)$)}]
	($(R-2)!.5!(R-1-2)+(-3\nodesize,0.6\nodesize)$) rectangle
	($(R-2)!.5!(R-1-2)+(3\nodesize,-0.6\nodesize)$);

\node[xshift=-0.5\nodesize, rotate=+29.5] at ($(R)!.3!(R-1)$) {pick on pl.1};
\node[xshift=+0.5\nodesize, rotate=-30.5] at ($(R)!.3!(R-2)$) {pick on pl.2};
\node[xshift=-0.35\nodesize] at ($(R-1)!.3!(R-1-1)$) {H};
\node[xshift=+0.35\nodesize] at ($(R-1)!.3!(R-1-2)$) {T};
\node[xshift=-0.35\nodesize] at ($(R-2)!.3!(R-2-1)$) {T};
\node[xshift=+0.35\nodesize] at ($(R-2)!.3!(R-2-2)$) {H};
\foreach \i in {1,2}{
	\foreach \j in {1,2}{
		\node[xshift=-0.35\nodesize] at ($(R-\i-\j)!.5!(R-\i-\j-1)$) {H};
		\node[xshift=+0.35\nodesize] at ($(R-\i-\j)!.5!(R-\i-\j-2)$) {T};
	}
}
}
\def\paddingBefore{
\begin{forest}
[,chance, name=root, s sep+=0.8\nodesize,
	[,pl1, name=A,
		[,chance, name=aa,
			[, pl1, name=a]		
		]		
	]
	[,pl1, name=B,
		[,chance, name=bb,
			[, pl1, name=b,]		
		]		
	]
	[$\dots$, yshift=-0.5\nodesize, edge={dashed},
		[, edge={draw=none},
			[$\dots$, yshift=0.25\nodesize, edge={draw=none}]
		]
	]
	[,pl1, name=Z,
		[,chance, name=zz,
			[, pl1, name=z]		
		]		
	]
]
\draw[dotted] (a) -- ($(a)-(0,0.9\nodesize)$);
\draw[dotted] (b) -- ($(b)-(0,0.9\nodesize)$);
\draw[dotted] (z) -- ($(z)-(0,0.9\nodesize)$);
\foreach \L/\O in
	{root/{A},
		A/{B}, B/{B}, Z/{B},
		aa/{0}, bb/{0,1},
		a/{1,...,N}, b/{2,...,N}, z/{N}
	}{
	\node[right=0.5\nodesize, child anchor=east] at (\L) {\footnotesize{\O}};
}
\node[xshift=-1.2\nodesize, yshift=0.25\nodesize] at (zz) {\footnotesize{0,1,...,N-1}};
\node [infoset, augmented, fit=(root)] {};
\node [pl1_cl_infoset, fit=(\corners{A})(\corners{Z})] {};
\node [infoset, augmented, fit=(aa)] {};
\node [infoset, augmented, fit=(bb)] {};
\node [infoset, augmented, fit=(zz)] {};
\node [pl1_cl_infoset, fit=(\corners{a})(\corners{z})] {};
\end{forest}
}
\def\paddingAfter{
\begin{forest}
[,chance, name=root, s sep+=0.8\nodesize,
	[,pl1, name=A,
		[,chance, name=aa,
			[,chance
				[$\dots$, edge={dotted}
					[,chance, edge={dotted}
						[, pl1, name=a]					
					]				
				]			
			]
		]		
	]
	[,pl1, name=B,
		[,chance, name=bb,
			[,chance
				[$\dots$, edge={dotted}
					[,chance, edge={dotted}
						[, pl1, name=b]					
					]				
				]			
			]
		]		
	]
	[$\dots$, yshift=-0.5\nodesize, edge={dashed},
		[, edge={draw=none}]
	]
	[,pl1, name=Z,
		[,chance, name=zz,
			[,chance
				[$\dots$, edge={dotted}
					[,chance, edge={dotted}
						[, pl1, name=z]					
					]				
				]			
			]
		]		
	]
]
\draw[dotted] (a) -- ($(a)-(0,0.9\nodesize)$);
\draw[dotted] (b) -- ($(b)-(0,0.9\nodesize)$);
\draw[dotted] (z) -- ($(z)-(0,0.9\nodesize)$);
\node[right=0.5\nodesize, child anchor=east] at (root) {\footnotesize{A}};
\foreach \L in {A,B,Z}{
	\foreach \ext/\O in {/B,!1/0, !11/1, !1111/{N-1}, !11111/N}{
		\node[right=0.5\nodesize, child anchor=east] at (\L\ext) {\footnotesize{\O}};
	}
}
\node [infoset, augmented, fit=(root)] {};
\node [pl1_cl_infoset, fit=(\corners{A})(\corners{Z})] {};
\node [infoset, augmented, fit=(aa)(zz)] {};
\node [infoset, augmented, fit=(aa!1)(zz!1)] {};
\node [infoset, augmented, fit=(aa!111)(zz!111)] {};
\node [pl1_cl_infoset, fit=(\corners{a})(\corners{z})] {};
\end{forest}
}
\begin{document}

    \title{Problems with the EFG formalism:\\
    a solution attempt using observations\\
    \medskip
    \large a cautionary tale of why you should use
    partially-observable stochastic games instead}
    \author{Kovarik Vojtech, Lisy Viliam\\
    vojta.kovarik@gmail.com viliam.lisy@agents.fel.cvut.cz
    }
    \maketitle
    \begin{abstract}
        We argue that the extensive-form game (EFG) model isn't powerful enough to express all important aspects of imperfect information games, such as those related to decomposition and online game solving.
We present a principled attempt to fix the formalism by considering information partitions that correspond to observations.
We show that EFGs cannot be ``fixed'' without additional knowledge about the original task, and show how to extend the EFG model under this assumption.
However, during our investigation, we ultimately concluded that a better solution is to abandon EFGs entirely and instead adopt a model based on partially observable stochastic games.
The primary contribution of the text thus lies in exposing the problems with EFGs and presenting a detailed study on introducing imperfect information by adding observations into an underlying perfect-information model.
    \end{abstract}

    \section{Introduction}\label{sec:intro}

The remainder of this paper operates under the assumption that the best way to fix the model we use to talk about imperfect information games is to extend the extensive-form game (EFG) model.
While writing this text, we came to believe that a better solution is to adopt a model based on partially-observable stochastic games.
For a definition of such a model and an overview of related literature, we invite the reader to see \cite{FOG}.

\paragraph{Why do we need a new notation?}
While the classical EFG formalism is very useful, it has grown outdated --- the recent algorithmic breakthroughs such as \cite{DeepStack} rely on decomposition, exploiting concepts such as augmented information sets and public states which aren't present in the classical EFG notation.
Coincidentally, multiagent reinforcement learning (MARL) solves similar problems as the EFG community, but there hasn't been a lot of transfer of ideas between the two areas. Making the EFG literature more accessible to the MARL community would simplify this transfer, as well as making the \emph{thinking} about the connections more efficient.
These two reasons lead us to believe it is time to revise the model we use to talk about imperfect information games.

\paragraph{The goal of this text:}
This text presents an analysis which should serve as an incremental step towards revising the model for imperfect information games.
We focus on a narrower list of desiderata, aiming for notation which defines
    \begin{itemize}
    \item augmented information sets -- being able to talk about available information even when it isn't my turn (e.g., for the construction of resolving gadget \cite{Neil_thesis}),
    \item observations -- to analyze how players receive and handle information (mainly since the approach seems natural, interesting, and relevant).
    \end{itemize}
We aim to do this in such a way that
    \begin{itemize}
    \item the new concepts fit together within one framework,
    \item the players get as much information as possible, as soon as possible (good for online play, leads to smaller subgames that are cheaper to solve),
    \item the resulting notation is familiar to people used to the classical notation.
    \end{itemize}

\subsection{Specific arguments for why the current notation doesn't suffice}

\begin{itemize}
\item There are currently definitions of augmented information sets (\cite{CFR-D}), public states (\cite{accelerated_BR}) and subgames (\cite{CFR-D}) in the literature. However, they do not necessarily always fit together (at the very least, the relationship between the definitions from \cite{CFR-D} and \cite{accelerated_BR} is not obvious).
\item These definitions currently do not behave as one would intuitively expect them to:
    \begin{itemize}
    \item One straightforward interpretation of the augmented information set definition from \cite{CFR-D} is ``$g,h\in \mc H$ belong to same $I\in \mc I^\textnormal{aug}_i$ if they have the sequence of $i$'s actions and classical information sets visited by $i$.''
    \item One way of defining public states is as ``subsets of $\mc H$ closed under the membership within the elements of $\mc I_i^\textnormal{aug}$ for each $i$''.
    \item However, this means that players will group all nodes between their action and their next information set. This partitioning leads to a strange behavior (for example, in poker, see Figure~\ref{fig:poker}).
    \end{itemize}    
\end{itemize}

\begin{figure}
\centering
\setlength{\nodesize}{2em}
\begin{tikzpicture}
\pokerHistory
\pokerA
\end{tikzpicture}
\begin{tikzpicture}
\pokerHistory
\pokerB
\end{tikzpicture}
\begin{tikzpicture}
\pokerHistory
\pokerC
\end{tikzpicture}
\begin{tikzpicture}
\pokerHistory
\pokerD
\end{tikzpicture}
\caption{\textbf{Failure of the recent EFG definitions}. From left to right:
(a): A subset of the public tree \emph{as we intuitively imagine it} in the game of limit Texas hold 'em, corresponding to a history where both players only placed the standard bet sizes, never folding or raising.
(b) and (c): Classical (solid) and augmented (dashed) information sets of player 1 and player 2.
(d): The public states \emph{actually} corresponding to these information partitions.
As we can see, there are only five public states along this history, rather than 12 as one might intuitively expect, and they do \emph{not} even correspond to the rounds of the game (which might still be acceptable).}
\label{fig:poker}
\end{figure}
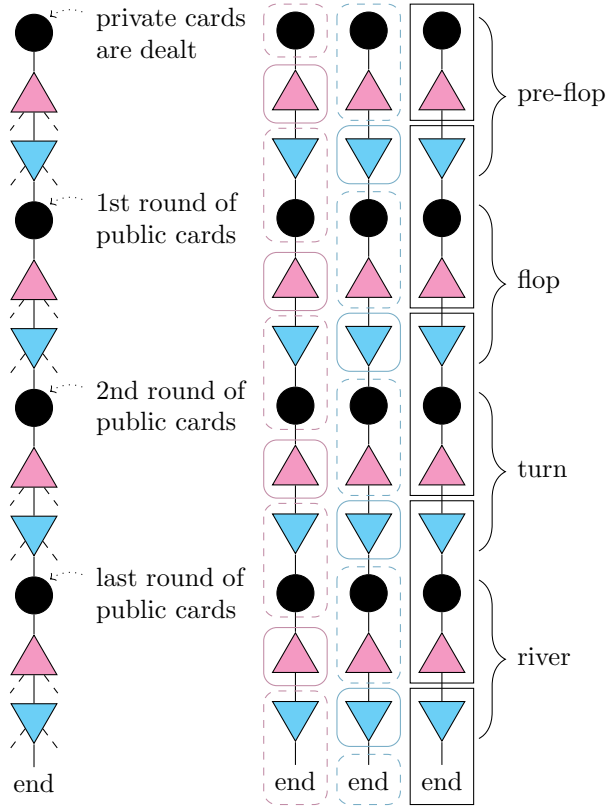

\begin{itemize}
\item We claim that the classical imperfect information EFG model is insufficient for some purposes and that the concepts such as augmented information sets either cannot be derived from it in a domain-independent way (Section~\ref{sec:ext_cl_model}).
\item While the classical model of imperfect information in games answers the question of \emph{which} information a player has available; it is under-specified in that it doesn't explain \emph{when} the information becomes available.
One way of interpreting the classical model is by assuming that the player receives no information when it isn't his turn.
However, this is often an inaccurate  way of modeling the actual situation -- when taken literally, this would correspond to each player only entering the room where the game is being played upon the start of his turn, being told the history of the game by a referee, taking his action, and then leaving the room again. Moreover, having to assume a player receives no information during his turn means he cannot use the time between his turns as effectively as he might otherwise.
\end{itemize}


In Section~\ref{sec:background}, we list the background relevant to the text that follows --- we use the perfect-information extensive-form game model as the ``underlying model'' and turn it into the ``classical'' imperfect information model by adding the ``classical'' information partitions.

Section \ref{sec:observations} formalizes the concept of observations built on top of the underlying model. We start by discussing the setting where the players aren't able to remember the observations they receive, but we quickly modify it in favor of the approach where players have perfect recall with respect to observations and their actions. We define two variants of observation histories, depending on whether the players can tell where one observation ends and another begins or not. We also introduce information partitions corresponding to observations, and the dual notion of observations corresponding to information partitions.

In Section~\ref{sec:obs_and_cl}, we present and discuss a list of technical desiderata for observations and the corresponding information partitions. We give examples of different ways in which these properties might fail to hold. We also investigate the limitations of defining observations based solely on the classical imperfect information model, without the guidance of the motivating real-life problem.

Finally, Section~\ref{sec:augm_model} introduces the proposed observation-based model of imperfect-information games. It first analyzes the notion of deducing features from observations, studies the conditions under which observation-induced information partitions have the structure of a tree, and presents a construction called ``stable modification'' that ensures these conditions are met. This section proves that the observation-based model satisfies the technical desiderata listed earlier, and gives a canonical recipe for building a ``coarse'' observation-based model that corresponds to a given classical model.

We also include an appendix, which briefly lists some ideas related to public states, common knowledge and subgames (but we do \emph{not} discuss this topic further). It also contains the description of a game of unfair matching pennies, which serves as an example of a ``tricky'' domain where some of the design proposals might be tested. We conclude with a long list of desiderata for a possible future notation.

\vknote{Not all of the proofs are currently typed in, but I have most of them written on paper. For the remaining propositions I have strong intuitions that they either hold exactly as they are written or only require minor technical modifications.}
    \section{Background}\label{sec:background}

\subsection{The Underlying Model}\label{sec:seq_model}

We can use the sequential model of EFGs, where a state in the game is represented by sequences of actions (of the players and -- if present -- chance) that lead to it being reached.
The ingredients of such $M = \left< \mc A, \mc H, N, P, \sigma_c, u \right>$ are the following:
\begin{itemize}
\item $\mc A$ is a finite set
of all \emph{actions} available in the game.
\item $\mc H \subset \left\{ a_1 a_2 \cdots a_n | \ a_i \in \mc A, \, n\in \N \cup \{0\} \right\}$ is the set of \emph{histories} legal in the game. We assume that $\mc H$ forms a non-empty finite tree\footnote{Recall that in the context of sequences, a tree is a set closed under initial segments, e.g. we have $a_1\cdots a_n \in \mc H \implies a_1 \cdots a_k \in \mc H$ whenever $k\leq n$.}.
	\begin{itemize}
	\item We use $g \sqsubset h$ to denote that $h$ \emph{extends} $g$. This is \emph{not} the strict version of extension, i.e. we have $h\sqsubset h$ for each $h$.
	\item The \emph{root} of $\mc H$ is the empty sequence $\emptyset$. The set of leaves of $\mc H$ is denoted $\mc Z$ and its elements $z$ are called \emph{terminal histories}. The set of non-leaf elements of $\mc H$ are $\mc H \setminus \mc Z$ and its elements $h$ are called \emph{non-terminal histories}.
	\item By $\mc A(h) := \{ a\in \mc A \, | \ ha\in \mc H \}$ we denote the set of actions available at $h$.
	\end{itemize}
\item The players in the game are $\{ 1, \dots, N , c\}$, where $c$ denotes the \emph{chance} (sometimes also called \emph{nature}).
\item $P : \mc H \setminus \mc Z \rightarrow \{ 1, \dots, N, c \}$ is the \emph{player function} which controls who acts in a given history.
	\begin{itemize}
	\item We use $i$ to denote elements of the player set $\{ 1, \dots, N , c\}$.
	\item Denoting $\mc H_i := \{ h \in \mc H \setminus \mc Z | \ P(h) = i \}$, we partition the histories as $\mc H = \mc H_1 \cup \dots \cup \mc H_N \cup \mc H_c \cup (\mc H \setminus \mc Z)$.
	\end{itemize}
\item $\sigma_c$ is the \emph{chance strategy} defined on $\mc H_c$. For each $h\in \mc H_c$, $\sigma_c(h)$ is a probability distribution over $\mc A(h)$.
\item $u = (u_i)_{i=1}^N$ contains the \emph{utility functions}, where $u_i : \mc Z \rightarrow \R$.
\end{itemize}

The understanding here is that the game starts in the state $h_0 = \emptyset$, then the player $P(h_0)$ picks an action $a_1\in \mc A(h_0)$ (or $a_0 \sim \sigma _c(h_0)$ is picked randomly when $P(h_0)=c$), and the game transitions to a new state $h_1 = h_0 a_1$.
This continues until the game reaches some terminal state $z = a_0 \cdots a_n$. Once this happens, each player $i \in \{1,\dots,N\}$ receives the reward $u_i(z)$.
The goal of each player is to choose actions in a way that maximizes $u_i(z)$.

\begin{remark}
Modifying EFGs formally is cumbersome.
\end{remark}
\vspace{-0.5em}
\noindent
In the underlying model where ``elements of $\mc H$ = sequences of actions'', some game-modifying operations are ill-defined or at least cumbersome to do. Examples: a) creating a subgame by replacing the upper part of the tree by a single chance node b) inserting a dummy node. The model that uses an arbitrary set $\mc H$ and transition function $\mc T : (\mc H \setminus \mc Z) \times \mc A \to \mc H$, such as POSG, would be better for this (but perhaps worse for other things).

\begin{notation}\label{not:h'}
For $\emptyset \neq h \in \mc H$ we will use the convention of rewriting $h$ as $h=h'a$, where $h'\in \mc H$, $a\in \mc A(h')$.
\end{notation}

\subsection{The Classical Definition of Imperfect Information}\label{sec:imp_inf}

The standard way of introducing imperfect information into EFGs is to define (classical) information sets. These group together all the histories that the player cannot tell apart.
Formally, a \emph{classical information partition} for player $i$ is any partition of $\mc H_i$ s.t. whenever two histories $g,h\in \mc H_i$ belong to the same element of the partition, they necessarily have $\mc A_i(g) = \mc A_i(h)$.

\begin{notation}
We use $\mc I^\textnormal{cl} = \left( \mc I^\textnormal{cl}_1, \dots, \mc I^\textnormal{cl}_N \right)$ to denote the classical information partitions (i.e. those that are always defined on $\mc H_i$ only).
For $h\in \mc H_i$ we denote by $I_i^\textnormal{cl}(h)$ the element of $\mc I_i^\textnormal{cl}$ that contains $h$.
\end{notation}

    \section{Observations and Perfect Recall}\label{sec:observations}

Suppose we have a real-life problem $P$ that includes dealing with imperfect information, and all relevant actors have perfect recall.
Optionally,  we might already have a classical imperfect information model $\left< M^\textnormal{cl}, \mc I^\textnormal{cl} \right>$ of $P$.
We use a (perfect information) EFG model $M$ to formally describe the underlying situation in $P$.
Our goal is to formalize the notions of augmented information sets and public states in $M$ in such a way that is useful for some of the recent game-theoretic algorithms and maximally advantageous for online-play.
In this section, we describe the observation formalism that shall be useful for discussing this topic.

\subsection{Observations Without Recall}

To introduce some of the ideas related to observations, we start with a brief discussion of the setting where the actors aren't able to remember their observations. Note that this setting \emph{can} be used to analyze perfect recall games since histories can be provided by the environment through observations. A secondary goal of this discussion is to allow for a comparison with the alternative proposed in Section~\ref{sec:obs_pr}.

We add observations as an extension of the underlying model. An important assumption we make is that each player knows exactly how this model works and how the generation of observations works.
At each state $h\in \mc H$, each player receives some observation. As an example, suppose player 1 observes the message ``Ugh!''. We assume the observations convey no meaning of themselves, so all that he can tell is that he is currently in one of the states which produce the observation ``Ugh!''\footnote{We could also assume that observations do convey some information about the current state of the game. This assumption would increase the ``learnability'' of $\left< M, \mc O\right>$ (see the list of desiderata in the appendix) at the cost of introducing another layer of notation.} (see the bottom left part of Figure~\ref{fig:no_memory_vs_pr}).

Formally, this can be done by using the pair $\left<M,\mc O\right>$, where $M=M^\textnormal{cl}$ and $\mc O = (O_1, \dots, O_N)$ are the observation-generating functions $O_i : \mc H \rightarrow \mathbf{O}_\textnormal{no\_recall}$ for some set $\mf O_\textnormal{no\_recall}$ of all possible \emph{observations}.
When the players have no memory, the information sets that correspond to these observations are
\begin{align*}
\{ g \in \mc H \, | \ O_i(g) = O_i(h) \}.
\end{align*}
\noindent In other words, all the information that has been observed is lost, and only the current observation is used to build the information set.

\noindent Note that the corresponding information partition covers the whole $\mc H$ -- in particular, it covers the terminal nodes $\mc Z$. All ``non-classical'' information partitions presented in this text will have this property. While the information available in terminal nodes is usually not strategically important, this aspect of the game might sometimes be interesting, for example, when the game is played repeatedly. Most importantly, it comes at no additional cost to the complexity of the discussion.

\begin{figure}[ht]
\centering 
\setlength{\nodesize}{2.5em}
\begin{tikzpicture}
\perfRecallGame[1]
\perfRecallInfosets
\end{tikzpicture}

\setlength{\nodesize}{2em}
\begin{tikzpicture}
\perfRecallGame[1]
\perfRecallInfosetsA
\end{tikzpicture}
\begin{tikzpicture}
\perfRecallGame[1.25]
\perfRecallInfosetsB
\end{tikzpicture}
\caption{Observations of the maximizing player (top, denoted X, Y, Z, W) and the corresponding memory-less information partitions (bottom left) and perfect-recall (bottom-right) information sets.}
\label{fig:no_memory_vs_pr}
\end{figure}
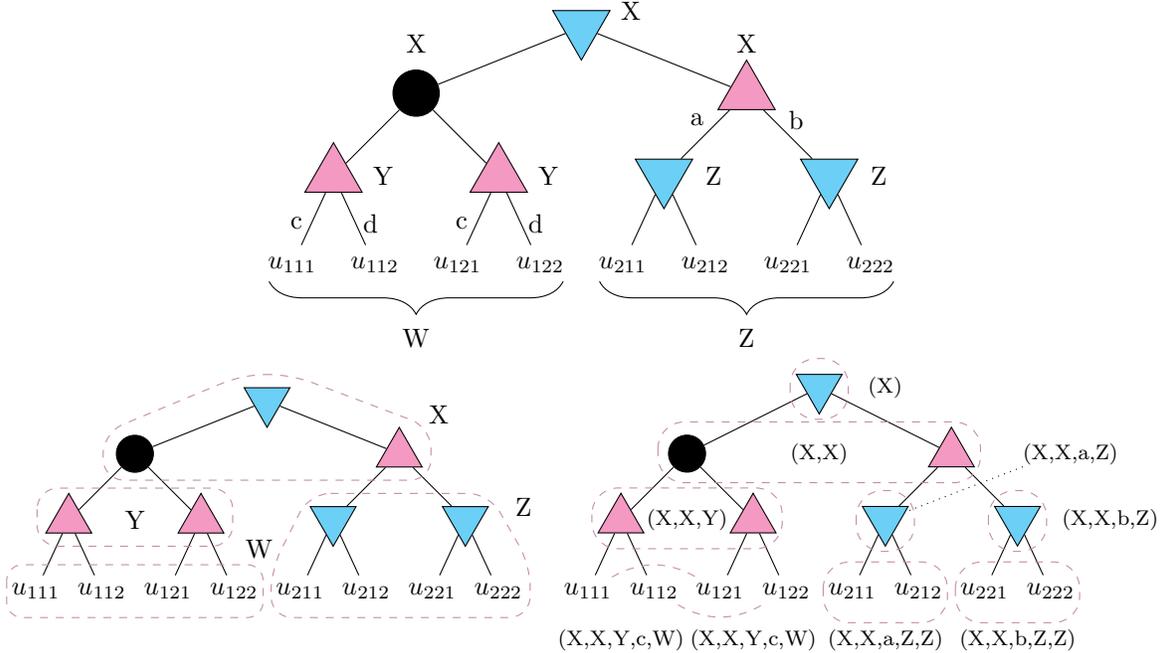

The following example shows how the classical model of imperfect information can arise as a particular case of this model.

\begin{example}\label{ex:obs_cl_no_recall}
Classical information sets coincide with those given by observing information-set labels.
\end{example}

\noindent Suppose that $\mc I_i^{\textnormal{cl}}$ is the ``classical'' information partition for player $i$ in a game $G = \left< M^\textnormal{cl}, \mc I^\textnormal{cl} \right>$.
The ``classical'' observations tell $i$ the name\footnote{The understanding here is that the observation $O=I_i^\textnormal{cl}(h)$ gives $i$ the name (label, number, hash-code) of the current information set, and he then constructs $I$ himself via $I_i(O)$, using his knowledge of $\left<M,\mc O\right>$. Formally, it might be more precise to use some injective labeling mapping $l : \mc I_i^{\textnormal{cl}} \rightarrow \N$ and write $O_i(h) = l(I_i^\textnormal{cl}(h))$. We choose not to do this for brevity.}
of the current information set whenever it is his turn to act:
\begin{equation*}
O^\textnormal{cl}_i(h) := 
\begin{cases}
\textnormal{``Not your turn.''} & h \notin \mc H_i \\
I^\textnormal{cl}_i(h) & h \in \mc H_i.
\end{cases}
\end{equation*}
Obviously, the information sets generated by $\mc O^\textnormal{cl}$ coincide with $\mc I^\textnormal{cl}_i$ in the sense that for each $h\in \mc H_i$, $\{ g \in \mc H \, | \ O_i(g) = O_i(h) \} = I^\textnormal{cl}_i(h)$.

\medskip

Note that since the agent has no memory, he relies on the observations to provide all required information. If the observations were provided verbally be a referee, he would have to repeat the whole history of the game on each turn. In practical terms, a typical element of $\mf O_\textnormal{no\_recall}$ in Chess would look like $O$ = ``Game started. White Pawn from B2 to B3. Black Pawn from B7 to B6. \dots White queen from C4 to E4. Captured black tower at E4. Your turn.''.
While this is a reasonable assumption for some practical implementations, it does not give a deeper insight into how different augmented information sets arise in games. Since the understanding of augmented information sets is the goal of this text, we will now turn our attention to the more detailed model where the agent can keep track of the observation history for himself.

\subsection{Observations with Perfect Recall}\label{sec:obs_pr}

When the players have perfect recall, the situation is different.
They can ``write down'' all of the observations they have received from the external environment so far together with all the decisions they have made, and potentially all kinds of other information (such as time-stamps for each observation, their thoughts/internal state upon making each decision, \dots).
To keep the analysis as simple as possible, we shall only consider the situation where each agent $i$ takes the observations $(O_1, O_2, \dots, O_n )$ received so far together with the actions $(a_1, \dots, a_k)$ he has made and combines them into his observation history $\vec O_i$ in a yet-unspecified manner.
The agent is then able to use $\vec O$, rather than $O_n$, to figure out where in the game tree he might currently be.

(Alternative names for $\vec O$ might be $i$'s action-observation history, information history, memory, log. We will stick with ``observation history'' in this text, with the critical disclaimer that $\vec O$ contains not only observations but also memories of past decisions, and these are of a different type than observations.)

It is surely possible to view observations as monolithic objects which cannot be further reduced (such as $O$ = ``White moved a Pawn from B2 to B3 and it is now your turn.'' without the ability to split the sentence in two). However, it will be useful to assume that there are two types of objects, elementary observations (such as $o_1$ = ``White moved a Pawn from B2 to B3.'' and $o_2$ = ``It is now your turn.'') and ``big'' observations received in a single state (such as $O = (o_1, o_2)$).

Consider an alternative scenario where the above elementary observations $o_1$ and $o_2$ were instead made in two successive states (e.g., because the opponent pressed the button on the tournament clock 30 seconds after making his move) and suppose there is a strategic advantage to being able to tell the two scenarios apart.
We need to decide whether we want our model to be such that the player can distinguish between the two scenarios \emph{based on the observations received}\footnote{Of course, we could add a third observation ``the opponent not doing anything for 30 seconds'', but that isn't the point.}?

To analyze the situation better, we shall formalize both versions and compare them. We can go ahead and spoil the conclusion: each version has its advantages and disadvantages, and they are, in general, not equivalent. However, the domains where the two approaches differ have important undesirable properties, and should be avoided (or rather, slightly different $M$ needs to be chosen to serve as an underlying model for given ``problematic'' real-life problem $P$).

\begin{definition}[Observation space]\label{def:obs_space}
The \emph{space of possible observations} in $M$ can be an arbitrary non-empty set, and shall be denoted $\mf O$.
\end{definition}

\noindent In the Chess example, typical elements of $\mc O$ can be ``Game started.'', ``White Pawn from B2 to B3.'', ``Black tower at E4 has been captured.'', ``Check by white.'', ``Game ended.'', ``You won.''. Of course, there might be different (e.g., less verbose) ways of encoding these observations. Moreover, observations can transfer all information the players might be interested in, but they can also be sparser: Suppose black knows that the white queen is at C4, a black tower is at E4, and that it is white's turn. Then the message ``C4 to E4'' suffices for him to deduce that the queen took his tower, and he doesn't have to be told explicitly.

The following ``set variant'' of observations captures the ``you can tell the difference'' approach.

\begin{definition}[Observations, set variant]
An \emph{observation function} is a vector $\mc O_{\{\}} = (\mc O_{\{1\}}, \dots, \mc O_{\{N\}})$, where $\mc O_{\{i\}} : h\in \mc H \mapsto O_{\{i\}}(h) \subset \mf O$.
The set $O_{\{i\}}(h)$ is called \emph{$i$'s observation at $h$}.
\end{definition}

By $X^* := \{ (x_1, \dots, x_n)\,|\ n\in \N\cup\{0\} ,\, x_i\in X\}$ we denote the set of all finite sequences in $X$ (including the empty sequence $\emptyset$).

\begin{definition}[Observation histories, set variant]
We define the player $i$'s \emph{observation history} function $\vec{\mc O}_{\{i\}} : h\in \mc H \mapsto \vec O_{\{i\}}(h) \in (\mc P(\mf O) \cup \mc A)^*$ as follows.
We set $\vec{O}_{\{i\}}(\emptyset) := (\, O_{\{i\}}(\emptyset) \, )$ for $h=\emptyset$ ($\vec{O}_{\{i\}}(\emptyset) := \emptyset $ when $O_{\{i\}}(\emptyset) = \emptyset$).
Otherwise, we have $h=h'a$, $a\in \mc A(h')$ and $\vec{O}_{\{i\}}(h') = ( X_1, \dots, X_n )$ for some $X_j \in \mc P(\mf O) \cup \mc A$. For non-empty $O_{\{i\}}(h)$ we set
\begin{equation*}
\vec O_{\{i\}}(h) := 
\begin{cases}
\left(\, X_1, \, \dots, \, X_n , \, a, \, O_{\{i\}}(h) \,\right) & h'\in \mc H_i \\
\left(\, X_1, \, \dots, \, X_n , \, O_{\{i\}}(h) \,\right) & h'\notin \mc H_i .
\end{cases}
\end{equation*}
Empty observations are ignored, i.e. for $O_{\{i\}}(h) = \emptyset$ we set
\begin{equation*}
\vec O_{\{i\}}(h) := 
\begin{cases}
\left(\, X_1, \, \dots, \, X_n , \, a \, \right) & h'\in \mc H_i \\
\left(\, X_1, \, \dots, \, X_n \, \right) = \vec O_{\{i\}}(h') & h'\notin \mc H_i 
\end{cases} .
\end{equation*}
\end{definition}

To formalize the approach where it doesn't have to be possible to tell where one observation ends and another begins, we define the ``sequence variant'' of the above notation.

\begin{definition}[Observation, sequence variant]
An \emph{observation function} is a vector $\mc O_{()} = (\mc O_{(1)}, \dots, \mc O_{(N)})$, where $\mc O_{(i)} : h\in \mc H \mapsto O_{(i)}(h) \in \mf O^*$.
The set $O_{(i)}(h)$ is called \emph{$i$'s observation at $h$}.
\end{definition}

\begin{definition}[Observation histories, sequence variant]
We define the player $i$'s \emph{observation history} function $\vec{\mc O}_{(i)} : h\in \mc H \mapsto \vec O_{(i)}(h) \in (\mf O \cup \mc A)^*$ as follows.
For $h=\emptyset$ we set $\vec{O}_{(i)}(h) := O_{(i)}(h) $.
Otherwise, we have $h=h'a$, $a\in \mc A(h')$ and $\vec{O}_{(i)}(h') = ( x_1, \dots, x_n )$ for some $x_j \in \mf O \cup \mc A$. For $O_{(i)}(h) = (o_1,\dots,o_k)$ we set
\begin{equation*}
\vec O_{(i)}(h) := 
\begin{cases}
\left(\, x_1, \, \dots, \, x_n , \, a, \, o_1, \, \dots, \, o_k \,\right) & h'\in \mc H_i \\
\left(\, x_1, \, \dots, \, x_n , \, o_1, \, \dots, \, o_k \,\right) & h'\in \mc H_i  .
\end{cases}
\end{equation*}
\end{definition}

\noindent Note that in the sequence variant, empty observations are ignored naturally.

When there is no danger of confusion, we omit the brackets and simply write $\mc O = (\mc O_1, \dots, \mc O_N)$, $O_i(h)$ and $\vec O_i(h)$. When we don't want to specify whether a given observation has the set or sequence form, we shall denote it as $O_i(h) := a,b,c$ --- with the understanding that it can be turned into either $\{a,b,c\}$ or $(a,b,c)$ by writing $O_{\{i\}}(h)$ or $O_{(i)}(h)$.

With the help of observation histories, players can partition $\mc H$ into information sets:

\begin{definition}[Information sets and partitions]
The \emph{information partitions} corresponding to an observation function $\mc O$ are defined as $\mc I^{\mc O} = (\mc I_1^{\mc O}, \dots, \mc I_N^{\mc O} )$, $\mc I^{\mc O}_i := \{ I_i^\mc{ O}(h) \, | \ h\in \mc H \}$, where 
\[
I_i^\mc{ O}(h) := I_i^\mc{ O} \left( \vec O_i(h) \right) :=
\{ g \in \mc H \, | \ \vec O_i(g) = \vec O_i(h) \} .
\]
\end{definition}

\medskip

Note that we do not need to use the perfect-recall model to reason about actors who have perfect recall. Indeed, the perfect-recall model can be turned into a memory-less model by replacing $\mc O_i$ by $\vec{\mc O_i}$ with $\mf O_{no\_memory} := (\mf O \cup \mc A)^*$, and the resulting model would be equally good. We do not claim that the memory-less model is worse --- rather, we claim that it only makes sense to consider memory-less models that arise in the just described manner.

\medskip

So far, we have shown how to generate information partitions from observations.
However, information sets are, in a sense, dual to observations, and there is a canonical way of going in the opposite direction:

\begin{example}\label{ex:obs_I}
Observations corresponding to an information partition.
\end{example}

\noindent Consider an arbitrary collection of partitions $\mc I = (\mc I_1, \dots, \mc I_N)$ of $\mc H$ and, writing either $h=\emptyset$ or $h=h'a$ for $h\in \mc H$ as in Notation~\ref{not:h'},  define
\begin{equation*}
O_i^{\mc I}(h) := 
\begin{cases}
I_i(h) & h=\emptyset \lor I_i(h) \neq I_i(h') \\
\emptyset & I_i(h) = I_i(h') .
\end{cases}
\end{equation*}

\noindent
To extend the definition to collections $\mc I_i$ which only cover some subset of $\mc H$, we could replace $h'$ in the above formula by ``the longest prefix of $h$ that belongs to $\bigcup \mc I_i$'' and set $O^{\mc I}_i(h) := \emptyset$ for $h\notin \bigcup \mc I_i$.

\section{Technical Desiderata and Related Obstacles}\label{sec:obs_and_cl}

When constructing an observation-based model $\left< M, \mc O \right>$ of $P$ in the (optional) presence of a classical model $\left< M^\textnormal{cl}, \mc I^\textnormal{cl} \right>$, it is natural to consider the following properties:

\begin{itemize}
\item[\namedlabel{it:noCH}{($\neg$CH)}] No changes to $M^\textnormal{cl}$ -- the underlying model shouldn't be modified in any way, i.e. $M=M^\textnormal{cl}$.
\end{itemize}
\noindent To simplify the the notation, we will henceforth assume that \ref{it:noCH} holds unless explicitly stated otherwise.
\begin{itemize}
\item[\namedlabel{it:cons}{(CONS)}] Consistency with $\left< M^\textnormal{cl}, \mc I^\textnormal{cl} \right>$ -- the information sets generated by $\mc O$ should coincide with $\mc I^\textnormal{cl}$:
\begin{equation*}\label{eq:coinc_on_H_i}
\mc I^{\mc O}_i |_{\mc H_i} = \mc I^\textnormal{cl}_i
\textnormal{ -- in other words, we have }
\left( \forall h\in \mc H_i \right) :
I_i^\mc{ O}(h) \cap \mc H_i = I_i^\textnormal{cl}(h) .
\end{equation*}
\item[\namedlabel{it:sep}{(APS)}] Acting-player separation -- each player should be able to tell whether it's currently his turn to act or not:
\begin{equation*}
\left(\forall h \in \mc H_i \right) \left(\forall g \in \mc H_{-i} \right) : I_i^\mc{ O}( h ) \neq I_i^\mc{ O}( g ) . \label{eq:separate}
\end{equation*}
\item[\namedlabel{it:iso}{(ISO)}] Immediate self-observation -- additionally to \emph{remembering} their own actions, the players should also be able to \emph{observe} their own actions as soon as they are made.
\item[\namedlabel{it:io}{(IO)}] Immediate observation -- more generally, if an event is visible to an actor in $P$, then the corresponding player in $\left<M,\mc O\right>$ should observe the corresponding piece of information as soon as the event happens.
\item[\namedlabel{it:tip}{(TSIP)}] Tree-structured information partitions -- each $\mc I_i^{\mc O}$ should form a tree when endowed with the partial order $I \sqsubset J \iff (\exists g \in I) (\exists h\in J) : g \sqsubset h$.\footnote{Without \ref{it:tip}, public states cannot serve as roots of subgames. Showing this requires, e.g., the game from Figure~\ref{fig:sneaking_game_modification}, and definition of public states and subgames (which aren't presented in this text at all).)}
\end{itemize}

\noindent As we will see in Lemma~\ref{lem:stab}, \ref{it:tip} is equivalent to the following stability condition.
\begin{itemize}
\item[\namedlabel{it:stab}{(STAB)}] Stability -- the observations generated by $\mc I^{\mc{ O}}$ should be equivalent to $\mc O$:
\begin{equation*}\label{eq:stable}
\mc I^{\mc{ O}} = \mc I^{\mc{ O^\star}}, \textnormal{ where }
\mc O^\star := \mc O^{\mc I^{\mc{ O}}} .
\end{equation*}
\end{itemize}

We make the following observations and remarks:

\begin{itemize}
\item Most of the above conditions speak about the properties of $\mc I^{\mc O}$. In this sense, the following text speaks about how to design well-behaved augmented information sets, and observations are only used as a useful formal language for the discussion.

\item This notation allows for a simple-yet-formal definition of perfect recall in the classical model:
\begin{definition}[Perfect recall]\label{def:perf_rec}
Players in $\left< M^\textnormal{cl}, \mc I^\textnormal{cl} \right>$ have \emph{perfect recall} if $\mc O^{\mc I^\textnormal{cl}}$ satisfies \ref{it:cons}.
\end{definition}

\item We do not claim that it is necessary to satisfy the condition \ref{it:iso}. Rather, our understanding is that the model shouldn't break when it is included. For example, we might wish to include the action in a public observation that is visible to everyone, and hence also to the player who has made the action.

\item \ref{it:stab} can be viewed as a stronger version of consistency of $\mc O^{\mc I^{\mc O}}$ with $\mc I^{\mc O}$ (in the sense of \ref{it:cons}).
\end{itemize}

\subsection{Observations and the Classical Model}

\begin{example}[Classical observations]\label{ex:cl}
Observations corresponding to $\mc I^\textnormal{cl}$ satisfy \ref{it:cons} and \ref{it:sep}, but fail \ref{it:io} and \ref{it:iso} (and \ref{it:stab}).
\end{example}

\noindent 
The classical information partitions are ``thin'' in the sense that it is impossible for $h=h'a$ and $h'$ to belong to the same $I\in \mc I^\textnormal{cl}_i$. It follows that $\mc O^{\mc I^\textnormal{cl}}$ can also be written as
\begin{equation*}
O_i^\textnormal{cl}(h) := 
\begin{cases}
\emptyset & h \notin \mc H_i \\
I^\textnormal{cl}_i(h) & h \in \mc H_i .
\end{cases}
\end{equation*}

\noindent These observations are consistent with $\mc I^\textnormal{cl}$ (tautologically) and satisfy \ref{it:sep} (since $h$ belongs to $\mc H_i$ iff $\vec O_i(h)$ ends with an information set).

However, they fail \ref{it:iso} --- indeed, for example in a two-player game without chance where players take one turn each, if player 1 didn't remember his action $a_1$ at some $h\in \mc H_1$, he would only learn about it through observing the next information set $I_1(h a_1 a_2)$ where he acts (and not already during the opponent's turn at $h a_1$).
More generally, $\mc O^{\mc I^\textnormal{cl}}$ fails \ref{it:io} as soon as a visible action of another player leads to some $g \notin \mc H_i$ (e.g. in any two-player perfect-information game with chance).
Later (Example~\ref{ex:iso}) we will see that $\mc O^{\mc I^\textnormal{cl}}$ also sometimes fails \ref{it:stab}.

\medskip

To simplify the notation, we introduce the notion of adding two observation functions together:
\begin{notation}
Let $\mc O$ and $\mc O'$ be two observation functions in $M$. By $\mc O + \mc O'$ we denote the observation function defined, for $O_i(h) = o_1,\dots, o_n$ and $O'_i(h) = o'_1,\dots, o'_k$, as $O^+_i(h) := o_1, \dots, o_n, o'_1, \dots, o'_k$.
\end{notation}


The observations from Example~\ref{ex:cl} can be modified to get \ref{it:iso}. However, this comes at a cost:

\begin{example}[Immediate self-observation]\label{ex:iso}
Enforcing \ref{it:iso} together with \ref{it:noCH} and \ref{it:sep} can destroy \ref{it:cons} when the set approach is adopted.
Enforcing \ref{it:noCH} together with \ref{it:cons} and \ref{it:sep} can destroy \ref{it:stab}.
\end{example}

\noindent Consider the observations $\mc O^\textnormal{iso}$ which let each player observe his action $a$ in the history immediately following the one where he took $a$:
\begin{equation*}
O_i^\textnormal{iso}(h) := 
\begin{cases}
\emptyset & h = \emptyset \lor h' \notin \mc H_i \\
a & h' \in \mc H_i .
\end{cases}
\end{equation*}
Combining these with the classical observations into $\mc O^\textnormal{iso+cl} := \mc O^\textnormal{iso} + \mc O^\textnormal{cl}$ yields observations which satisfy \ref{it:iso} (by definition) and \ref{it:sep} (by Example~\ref{ex:cl}).

However, as the game in Figure~\ref{fig:sneaking_game_modification} illustrates, $\mc O^\textnormal{iso+cl}_{\{\}}$ can be inconsistent because the player can now always tell whether he acts two times in a row or not.
The sequence version $\mc O^\textnormal{iso+cl}_{()}$ will remain consistent, but it will be unstable.
In fact, the only partition, of $M^\textnormal{cl}$ from Figure~\ref{fig:sneaking_game_modification}, satisfying both \ref{it:cons} and \ref{it:sep} is the one depicted in therein. Since this partition is unstable, we get the second part of the claim.
Adding a single dummy node (Figure~\ref{fig:sneaking_game_modification}, right) leads to both $\mc O^\textnormal{iso+cl}_{\{\}}$ and $\mc O^\textnormal{iso+cl}_{()}$ satisfying \ref{it:cons}, \ref{it:sep}, \ref{it:iso} and \ref{it:stab}, at the cost of modifying the domain (i.e. failing \ref{it:noCH}).

\begin{figure}[htb]
\centering
\begin{tikzpicture}[scale=1]
\node(middle){};
\sneakingGamePartition
\sneakingGameModification
\end{tikzpicture}
\caption{Sneaking game where the skill of one player determines whether the other will detect him, and different ways of subsequent attacking are effective in each scenario (left). This is a domain where set-like observations together with immediate self-observation are inconsistent. (Credit to Tomáš Gavenčiak for realizing this might be possible.)
Even without \ref{it:iso}, the augmented information states are unstable (middle), but a stable modification can be constructed by adding one dummy node (right).}
\label{fig:sneaking_game_modification}
\end{figure}
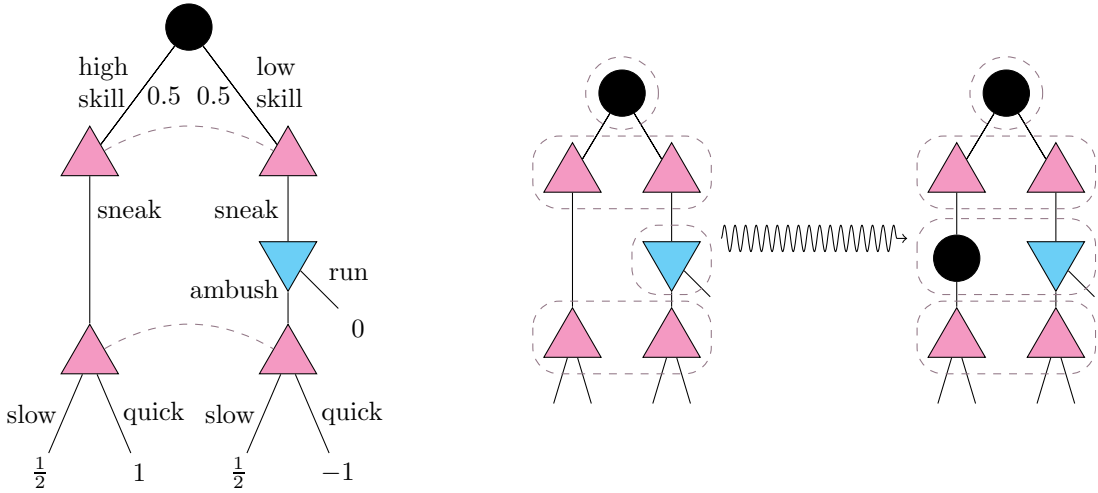

Example~\ref{ex:iso} yields the following corollary:

\begin{corollary}[A case against \ref{it:noCH}]\label{cor:impos}
In general, it is impossible to simultaneously achieve \ref{it:cons}, \ref{it:sep} and \ref{it:stab} unless \ref{it:noCH} is dropped.
\end{corollary}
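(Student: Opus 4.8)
The plan is to use the sneaking game $M^\textnormal{cl}$ of Figure~\ref{fig:sneaking_game_modification} as the witnessing domain and to show that on it, assuming \ref{it:noCH} (so $M = M^\textnormal{cl}$), \emph{every} observation function satisfying \ref{it:cons} and \ref{it:sep} necessarily fails \ref{it:stab}; this is exactly the impossibility claimed, and since adding a single dummy node (Figure~\ref{fig:sneaking_game_modification}, right) restores all of \ref{it:cons}, \ref{it:sep}, \ref{it:stab} it both pins the blame on \ref{it:noCH} and shows the statement is non-vacuous. I would first note that a \ref{it:cons}+\ref{it:sep}-compatible $\mc O$ exists at all --- by Example~\ref{ex:cl}, $\mc O^{\mc I^\textnormal{cl}}$ is one --- so the corollary really is the assertion that none of them is also stable.

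Write $H_1 \sqsubset H_2$ for player~1's two (consecutive) decision nodes on the high-skill branch and $L_1 \sqsubset N \sqsubset L_2$ for the corresponding nodes on the low-skill branch, where $N \in \mc H_2$ and $\{H_1, L_1\}$, $\{H_2, L_2\}$ are player~1's two classical information sets. From \ref{it:cons} and \ref{it:sep} alone I would extract three facts about $\mc I^{\mc O}_1$: (i) $I^{\mc O}_1(H_1) = I^{\mc O}_1(L_1)$ and $I^{\mc O}_1(H_2) = I^{\mc O}_1(L_2)$, by \ref{it:cons} via $\mc I^\textnormal{cl}_1$; (ii) $I^{\mc O}_1(H_2) \neq I^{\mc O}_1(H_1)$, again by \ref{it:cons}; (iii) $I^{\mc O}_1(N) \neq I^{\mc O}_1(L_1)$ --- this one is unconditional, since $\vec O_1(N)$ is obtained from $\vec O_1(L_1)$ by appending player~1's action at $L_1$ (and possibly more), hence is strictly longer --- and $I^{\mc O}_1(N) \neq I^{\mc O}_1(H_2)$ by \ref{it:sep}, because $N \notin \mc H_1$ while $H_2 \in \mc H_1$. (This is the substance of the uniqueness remark in Example~\ref{ex:iso}, but only the inequalities are needed.)

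The heart of the proof is the round trip $\mc O \rightsquigarrow \mc I^{\mc O} \rightsquigarrow \mc O^\star := \mc O^{\mc I^{\mc O}} \rightsquigarrow \mc I^{\mc O^\star}$. By (iii), $I^{\mc O}_1(N)$ differs from the information set of $N$'s parent $L_1$, so the re-derived observation $O^\star_1(N) = I^{\mc O}_1(N)$ is non-empty; likewise $O^\star_1(H_2)$ is non-empty by (ii) and $O^\star_1(L_2)$ is non-empty because $I^{\mc O}_1(L_2) = I^{\mc O}_1(H_2) \neq I^{\mc O}_1(N)$. Unwinding the observation-history recursion and using (i) to identify $\vec O^\star_1(H_1)$ with $\vec O^\star_1(L_1)$, one gets $\vec O^\star_1(H_2) = \vec O^\star_1(H_1) \cdot \bigl(\textnormal{action at }H_1,\ O^\star_1(H_2)\bigr)$ but $\vec O^\star_1(L_2) = \vec O^\star_1(L_1) \cdot \bigl(\textnormal{action at }L_1,\ O^\star_1(N),\ O^\star_1(L_2)\bigr)$, so $\vec O^\star_1(L_2)$ is exactly one entry longer than $\vec O^\star_1(H_2)$ --- that extra entry being the observation player~1 now ``receives'' at $N$ on the low branch but not on the high branch, where he simply moves twice in a row. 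Hence $\vec O^\star_1(H_2) \neq \vec O^\star_1(L_2)$, i.e. $I^{\mc O^\star}_1(H_2) \neq I^{\mc O^\star}_1(L_2)$, while $I^{\mc O}_1(H_2) = I^{\mc O}_1(L_2)$ by (i); therefore $\mc I^{\mc O} \neq \mc I^{\mc O^\star}$ and \ref{it:stab} fails. Since $\mc O$ was an arbitrary \ref{it:cons}+\ref{it:sep}-compatible observation function on $M^\textnormal{cl}$, the corollary follows. One could instead route this through Lemma~\ref{lem:stab} and argue that the forced partition violates \ref{it:tip}, but the direct computation is self-contained and, I think, cleaner.

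The step I expect to be the main obstacle is (iii) and its use: one must check that $I^{\mc O}_1(N) \neq I^{\mc O}_1(L_1)$ is genuinely forced by the bookkeeping --- that a player's action always enters his observation history even when the accompanying observation is empty --- and that the set and the sequence variant of observations behave identically in this respect; everything else is a routine, if fiddly, unwinding of the recursion.
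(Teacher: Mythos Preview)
Your proposal is correct and follows essentially the same approach as the paper: both use the sneaking game of Figure~\ref{fig:sneaking_game_modification} as the witnessing domain and argue that any information partition satisfying \ref{it:cons} and \ref{it:sep} on that $M^\textnormal{cl}$ is forced to be unstable. The paper itself gives no standalone proof of the corollary --- it simply points back to Example~\ref{ex:iso}, where it asserts (without computation) that the depicted partition is the \emph{unique} one satisfying \ref{it:cons} and \ref{it:sep}, and that it is unstable; your version is a careful unpacking of exactly those two assertions, extracting only the three inequalities actually needed and then carrying out the $\mc O \rightsquigarrow \mc I^{\mc O} \rightsquigarrow \mc O^\star$ round trip explicitly.
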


In some sense, the underlying model of the sneaking game counterexample is anomalous. The following lemma characterizes those domains which do not have to be modified in order to make $\mc O^\textnormal{iso+cl}$ work:

\begin{lemma}[Well-behaved domains]\label{lem:consist_domain}
If the condition \ref{it:wb} holds for $\left< M^\textnormal{cl}, \mc I^\textnormal{cl} \right>$, then $\mc O^{\textnormal{cl+iso}}$ satisfies \ref{it:cons}, \ref{it:sep}, \ref{it:iso} and \ref{it:stab} without having to drop \ref{it:noCH}.
\begin{itemize}
\item[\namedlabel{it:wb}{(WBD)}] $\ \ \ \ \left( \forall I \in \mc I_i^\textnormal{cl} \right) \left( \forall g,h \in I \right) \left( \forall a \in \mc A(I) \right) :
\left( ha \in \mc H_i \ \& \ ga \in \mc H_i \right) \lor \left( ha \notin \mc H_i \ \& \ ga \notin \mc H_i \right)$ .
\end{itemize}
\end{lemma}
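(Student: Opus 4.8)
The plan is to treat the five conclusions one at a time, leaning on Examples~\ref{ex:cl} and~\ref{ex:iso}. Throughout I will use the paper's standing assumption that $\langle M^\textnormal{cl}, \mc I^\textnormal{cl}\rangle$ has perfect recall, i.e.\ (Definition~\ref{def:perf_rec}) that $\mc O^\textnormal{cl}=\mc O^{\mc I^\textnormal{cl}}$ already satisfies \ref{it:cons}; this is genuinely needed, since adding observations only refines information partitions, so if $\mc O^\textnormal{cl}$ strictly refines $\mc I^\textnormal{cl}$ on $\mc H_i$ then so does $\mc O^\textnormal{cl+iso}$ and \ref{it:cons} fails regardless of \ref{it:wb}. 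Three conclusions are immediate. Condition \ref{it:noCH} holds because $\mc O^\textnormal{iso}$ and $\mc O^\textnormal{cl}$ are built from the data of $\langle M^\textnormal{cl},\mc I^\textnormal{cl}\rangle$ alone and never touch $M^\textnormal{cl}$; \ref{it:iso} holds by construction of the summand $\mc O^\textnormal{iso}$, which makes each player observe his action $a$ already at the history $h'a$ immediately after $h'\in\mc H_i$; and \ref{it:sep} follows as in Example~\ref{ex:cl}, by a short induction on $\sqsubset$ showing that $\vec O_i^\textnormal{cl+iso}(h)$ carries an $\mc I_i^\textnormal{cl}$-label as its final observation exactly when $h\in\mc H_i$ --- the extra $\mc O^\textnormal{iso}$ entries are actions, which (using that action names and information-set labels are distinguishable) cannot disturb this.

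The substance is \ref{it:cons} and \ref{it:stab}, and this is where \ref{it:wb} enters. First I would establish a \emph{normal form} for observation histories: for $h\in\mc H_i$, let $h_1\sqsubset\dots\sqsubset h_m=h$ be the $i$-decision nodes on the path to $h$ and $a^{(j)}$ the action with $h_ja^{(j)}\sqsubseteq h$; then $\vec O_i^\textnormal{cl}(h)=(I_i^\textnormal{cl}(h_1),a^{(1)},I_i^\textnormal{cl}(h_2),a^{(2)},\dots,a^{(m-1)},I_i^\textnormal{cl}(h_m))$ is strictly alternating, and $\vec O_i^\textnormal{cl+iso}(h)$ is obtained from it by attaching, right after each $a^{(j)}$, either one more copy of $a^{(j)}$ (sequence variant) or --- the delicate case --- by fusing $a^{(j)}$ into the observation of the next node, whose shape (a singleton, versus a two-element set also carrying the label $I_i^\textnormal{cl}(h_{j+1})$) depends on whether $h_ja^{(j)}\in\mc H_i$. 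In either variant $\vec O_i^\textnormal{cl+iso}(h)$ is recoverable from $\vec O_i^\textnormal{cl}(h)$ together with the bits ``$h_ja^{(j)}\in\mc H_i$?'' for $j<m$; hence $\mc I^{\mc O^\textnormal{cl+iso}}_i|_{\mc H_i}$ refines $\mc I^{\mc O^\textnormal{cl}}_i|_{\mc H_i}=\mc I_i^\textnormal{cl}$ automatically, and \ref{it:cons} amounts to the reverse refinement, i.e.\ to no classical information set being split. So take $g,h\in\mc H_i$ in a common $I\in\mc I_i^\textnormal{cl}$. Perfect recall gives $\vec O_i^\textnormal{cl}(g)=\vec O_i^\textnormal{cl}(h)$, so the two normal forms share all the $I_i^\textnormal{cl}(h_j)$ and $a^{(j)}$; hence at each step $j$ the two $j$-th decision nodes $g_j,h_j$ lie in the same information set $I_i^\textnormal{cl}(h_j)$ with $a^{(j)}\in\mc A(I_i^\textnormal{cl}(h_j))$, and \ref{it:wb} says precisely that then $g_ja^{(j)}\in\mc H_i\iff h_ja^{(j)}\in\mc H_i$. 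Thus the recoverability bits agree as well, $\vec O_i^\textnormal{cl+iso}(g)=\vec O_i^\textnormal{cl+iso}(h)$, and \ref{it:cons} follows.

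For \ref{it:stab}, write $\mc O^\star:=\mc O^{\mc I^{\mc O^\textnormal{cl+iso}}}$. The refinement $\mc I^{\mc O^\star}$ of $\mc I^{\mc O^\textnormal{cl+iso}}$ always holds, since $\vec O_i^\star(h)$ records the current $\mc I^{\mc O^\textnormal{cl+iso}}_i$-block (as its last nonempty observation, or the root's). For the converse --- equivalently \ref{it:tip}, via Lemma~\ref{lem:stab} --- I would show that two histories with equal $\vec O_i^\textnormal{cl+iso}$ traverse the same sequence of $\mc I^{\mc O^\textnormal{cl+iso}}_i$-blocks, so that their $\mc O^\star$-histories (which record exactly that sequence interleaved with $i$'s actions, all of which are already present in $\vec O_i^\textnormal{cl+iso}$) coincide. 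A block can refine along a path only at $i$'s decision nodes and, when $h_ja^{(j)}\notin\mc H_i$, at the intermediate child $h_ja^{(j)}$ (in the sequence normal form: the node bearing the ``$\dots,a^{(j)},a^{(j)}$'' prefix just before $I_i^\textnormal{cl}(h_{j+1})$); by perfect recall and \ref{it:wb} the existence of each such intermediate node is again fixed by the shared classical information set and action, so the block-sequences of $\vec O_i^\textnormal{cl+iso}$-equivalent histories match, and \ref{it:stab} follows. ($i$-histories and non-$i$-histories are handled uniformly, a non-$i$-history sitting at a ``$\dots,a^{(m)},a^{(m)}$'' node relative to its last preceding $i$-decision node $h_m$.)

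I expect the main obstacle to be the bookkeeping in the normal-form step, specifically pinning down how the set variant of $\mc O^\textnormal{cl+iso}$ fuses the remembered action, the observed action, and the information-set label into a single set at nodes $h_ja^{(j)}\in\mc H_i$: that junction is exactly where, absent \ref{it:wb}, the set variant distinguishes ``$i$ moved twice in a row'' from ``$i$ moved, then another actor, then $i$'' (Example~\ref{ex:iso}, Figure~\ref{fig:sneaking_game_modification}) and thus can break \ref{it:cons} (set variant) or \ref{it:stab} (both variants). Once the normal form is fixed and \ref{it:wb} is read as ``a player's turn-status after any action is constant on each classical information set,'' the verifications are short; I would also note in passing that the sequence variant satisfies \ref{it:cons} even without \ref{it:wb}, only \ref{it:stab} needing it, matching the discussion after Example~\ref{ex:iso}.
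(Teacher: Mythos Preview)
Your proposal is correct and follows the same overall strategy as the paper, but with considerably more explicit bookkeeping. The paper's proof is terse: for \ref{it:cons} it simply says that under perfect recall ``adding $\mc O^{\textnormal{iso}}$ adds no new information at $\mc H_i$'' (leaving the role of \ref{it:wb} implicit), and for \ref{it:stab} it argues directly that a stability failure would require two paths between consecutive classical infosets of $i$ that differ in whether an intermediate non-$i$ node is visited, which \ref{it:wb} rules out. Your normal-form decomposition makes the same mechanism transparent --- in particular, you isolate exactly the ``bits'' $h_j a^{(j)}\in\mc H_i$ on which \ref{it:wb} acts, which is a cleaner explanation of why the set variant needs \ref{it:wb} already for \ref{it:cons} while the sequence variant needs it only for \ref{it:stab}. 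One small structural difference: you route \ref{it:stab} through the \ref{it:tip} equivalence of Lemma~\ref{lem:stab}, which in the paper appears only in the next section; this is not circular, but if you want a self-contained argument you can replace that appeal by the paper's direct characterisation of stability failure, which your normal form already supplies.
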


\begin{proof}
As noted in Example~\ref{ex:cl}, $\mc O^{\textnormal{cl}}$ satisfies \ref{it:cons}. If $\left< M^\textnormal{cl}, \mc I^\textnormal{cl} \right>$ has perfect recall, adding $\mc O^{\textnormal{iso}}$ adds no new information at $\mc H_i$, so consistency is preserved.
\ref{it:sep} holds since $\vec O_i(h)$ ends in an infoset if and only if $h$ belongs to $\mc H_i$ (thanks to $\mc O^{\textnormal{iso}}$, it ends in an action for $h\notin \mc H_i$).
\ref{it:iso} trivially holds thanks to $\mc O^{\textnormal{iso}}$.

To prove \ref{it:stab}, note that stability can only fail if there are two clasical infosets $I\sqsubset J$ of player $i$, such that there is no classical infoset of $i$ between $I$ and $J$, and $i$ receives a different sequence of observations between $I$ and $J$ depending on which path they take.
For $\mc O^{\textnormal{cl+iso}}$, this can only happen because of $\mc O^{\textnormal{iso}}$. Moreover, this can only happen if some path from $I$ leads immediately to $J$, while another path first visits an augmented infoset where $i$ doesn't act.
This implies the result since \ref{it:wb} ensures that such a situation cannot occur.
\end{proof}

\noindent This lemma also serves as a guide for modifying those domains where $\mc O^\textnormal{iso+cl}$ would be unstable or inconsistent.
Indeed, the obvious way of meeting the assumption \ref{it:wb} is to add a dummy chance node with a single noop action below each player node. However, we typically do not have to go so far, as it suffices to add dummy nodes after those actions in $\mc A(I)$ which sometimes lead directly to $\mc H_i$ and sometimes not.

We believe that Corollary~\ref{cor:impos} is a strong argument\footnote{However, the argument is not decisive since there might still be other ways of modeling observations that avoid this issue.} against the ``no modifying of $M^\textnormal{cl}$'' requirement. We explain that some underlying models weren't constructed with online-play in mind, and thus, one shouldn't expect every such model to be reasonable for modeling real-time decision-making. Luckily, this is not a major limitation, as the modification suggested by Lemma~\ref{lem:consist_domain} makes close to no changes to $\mc H$ in typical cases, and doubles the size of $\mc H$ in the worst case.

The last example of this section is related to the immediate observation property.

\begin{example}[Observations are domain specific]\label{ex:obs_are_dom_spec}
Whether \ref{it:io} holds depends not only on $\mc I^\textnormal{cl}$, but also on the original problem $P$.
\end{example}

\noindent We have seen that there are domains where $\mc O^{\textnormal{iso+cl}}$ satisfies \ref{it:noCH}, \ref{it:cons}, \ref{it:sep}, \ref{it:iso} and \ref{it:stab}.
However, it will often fail the more general \ref{it:io} property.
Indeed, consider a trivial betting game which starts by the referee randomly drawing and revealing one card (either A or B), then player 1 has one second to decide whether to bet 1 dollar or give up, and then player 2 has an additional one second to decide whether to bet 1 dollar or give up. If the revealed card was A, then player 1 wins all bets. If it was B, then player 2 wins all bets.

According to the description above, player two should have two seconds to think -- one during the first player's move, and one during his move. Importantly, he should know the value of the A/B card for the whole period. However, according to $\mc O^{\textnormal{iso+cl}}$, he will only have this information for one second, during his turn.
However, we could modify the game definition such that player two now has to have his eyes closed until his turn begins.
In this case, the classical model of the game would remain the same but $\mc O^{\textnormal{iso+cl}}$ would be the correct description of the situation.

The takeaway from these two examples is that when defining augmented information sets (or observations) in a specific imperfect information game, one should make design decisions based on the original problem, rather than on its classical model $\left< M^\textnormal{cl}, \mc I^\textnormal{cl} \right>$.

\subsection{Extending the Classical Model in the Absence of \texorpdfstring{$P$}{P}}
\label{sec:ext_cl_model}

Sometimes we might want to apply an algorithm such as \cite{CFR-D} or \cite{MCCR} to an imperfect information game for which we only have the classical model $G=\left< M^\textnormal{cl}, \mc I^\textnormal{cl} \right>$.
To do this, we need to define augmented information sets in $\mc H$ that are consistent with $G$.
Moreover, in resolving it is cheaper to resolve smaller subgames, so it is desirable to observe as much information as possible (and thus have smaller augmented information sets and public states).
While \ref{it:io} is ill-defined in this case (as shown in Example~\ref{ex:obs_are_dom_spec}), it might still be beneficial to have some automatic way of generating augmented information sets that are reasonably small.
An easy-to-define candidate are the ``coarse'' augmented information sets corresponding to $\mc O^{\textnormal{iso+cl}}$ (first introduced in \cite{CFR-D}\notetoself{Was it there first?}).
Unfortunately, $\mc O^{\textnormal{iso+cl}}$ fails \ref{it:io} even in the case of perfect information games (Figure~\ref{fig:coarse_infosets_fail_io}), so a different solution is required.
A natural approach would be to find the ``most informative'' observations consistent with $\mc I^\textnormal{cl}$.
However, as shown in Example~\ref{ex:no_finest_partit}, there might be several such observations functions, without a definite candidate for the ``correct'' answer.

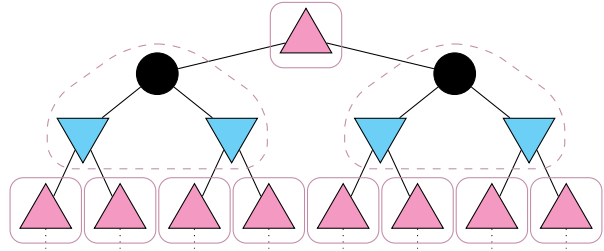
\begin{figure}[ht]
\centering 
\begin{tikzpicture}[scale=1]
\setlength{\nodesize}{2.25em}
\ISOfail
\end{tikzpicture}
\caption{A perfect information game illustrating that the information partitions corresponding to $\mc O^\textnormal{iso+cl}$ are sometimes too ``coarse'' to satisfy the immediate observation property \ref{it:io}.}
\label{fig:coarse_infosets_fail_io}
\end{figure}

\begin{example}[No canonical finest partition]\label{ex:no_finest_partit}
For some games, there is no single finest partition of $\mc H$ satisfying \ref{it:cons}.
\end{example}

\noindent The games in Figure~\ref{fig:no_finest_partition} illustrate that there doesn't always exist a unique solution to the problem of finding information partitions that are ``maximally informative'' (i.e. anything finer would fail either \ref{it:stab} or \ref{it:cons}).
On the left side of Figure~\ref{fig:no_finest_partition}, we see a trivial example of a domain with this property.
The domain on the right is a more sophisticated example that shows that sometimes a choice has to be made about which information to reveal.
It follows that it is impossible to find a general formula for ``the most informative $\mc O$''.

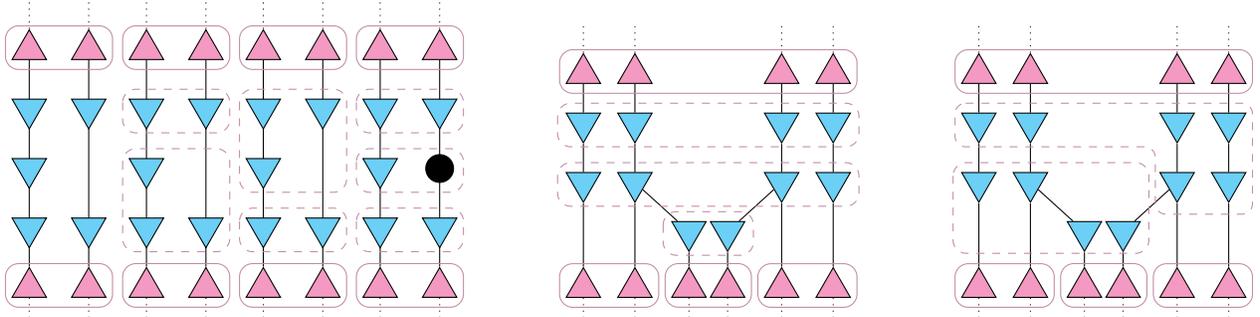
\begin{figure}
\centering 
\setlength{\nodesize}{1.5em}
\begin{tikzpicture}
\noFinestPartitionTrivial
\end{tikzpicture}
\begin{tikzpicture}
\noFinestPartitionTrivial
\augInfosetDeprecated(UL,UR)
\augInfosetDeprecated(M,BR)
\end{tikzpicture}
\begin{tikzpicture}
\noFinestPartitionTrivial
\node(dummy)[missing] at ($(B)!0.5!(B-1)$) {};
\augInfosetDeprecated(UL,dummy)
\augInfosetDeprecated(BL,BR)
\end{tikzpicture}
\begin{tikzpicture}
\noFinestPartitionTrivial
\node(dummy)[chance] at ($(B)!0.5!(B-1)$) {};
\augInfosetDeprecated(UL,UR)
\augInfosetDeprecated(M,dummy)
\augInfosetDeprecated(BL,BR)
\end{tikzpicture}
\hfill
\begin{tikzpicture}[scale=1]
\noFinestPartitionExample
\noFinestPartitionExampleA
\end{tikzpicture}
\hfill
\begin{tikzpicture}[scale=1]
\noFinestPartitionExample
\noFinestPartitionExampleB
\end{tikzpicture}
\caption{Domains where the finest partition isn't unique. Classical information sets are solid, the augmented information sets are dashed.}
\label{fig:no_finest_partition}
\end{figure}

This leaves us with three options (of which we prefer the last one).
\begin{enumerate}
\item Considering \emph{one of} the information partitions that cannot be refined further  without failing \ref{it:cons}. In general, finding such a partition might be hard.
\item Designing a heuristic that finds some reasonable refinement. This might not be worth spending time on. Note that the "let players know which information sets they might encounter next during their turn" approach is good in perfect information games and simultaneous move games, but fails already in Poker.
\item Ignore the problem, arguing that the correct approach is to consider domain-specific observations.
\end{enumerate}
    \section{The Observation-based Model of Imperfect Information}\label{sec:augm_model}

In this section, we discuss the role of observation in the absence of the classical model of imperfect information. We formalize the notion of feature-deduction features and discuss stable modifications for general observation-functions. We define the  ``official'' version of the proposed observation-based model and prove its basic properties.

\subsection{Stability and Deduction from Observations}\label{sec:features}

A \emph{feature} in $M$ is any function $f : D \to X$, where $X$ is an arbitrary set and $D =: \textnormal{dom}(f) \subset \mc H$ is the domain of $f$.
Examples of features are utility $u_i : \mc Z \to \R$, the player function $P : \mc H \setminus Z \rightarrow \{1,\dots,N,c\}$, the ``Is it my turn?'' function $\mf{1}_{\mc H_i} : \mc H \rightarrow \{0,1\}$, or the list of available actions $h\in \mc H \setminus Z \mapsto \mc A(h) \subset \mc A$.

\begin{definition}
Let $f$ be a feature in $M$ and $H\subset \mc H$.
We say that the player $i$ can \emph{deduce from observations $\mc O$ the value of $f$ at $H$} if
for each $h\in \mc H \cap \textnormal{dom}(f)$ the whole $I_i(h) \cap H$ is contained in $\textnormal{dom}(f)$ and we have $f(g)=f(h)$ for each $g \in I_i^{\mc O}(h) \cap H$.
\end{definition}

\noindent For brevity, we will say that ``$i$ can deduce $f$ at $H$'' when the context is clear, and omit the ``at $H$'' part when $H=\mc H$.

Features allows us to formulate the following characterization of stability:

\begin{lemma}[Equivalent formulations of stability]\label{lem:stab}
The pair $\left< M, \mc O \right>$ satisfies \ref{it:tip} if and only if it satisfies \ref{it:stab}.
This is further equivalent the conjunction of \ref{it:wb} and each player $i$ being able to deduce $\mc O_i$.
\end{lemma}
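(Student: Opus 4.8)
The plan is to route both equivalences through one reformulation. Write $\mc O^\star := \mc O^{\mc I^{\mc O}}$ as in Example~\ref{ex:obs_I}, and set $\tau_i(h) := \vec O^\star_i(h)$. Unwinding the definitions, $\tau_i(h)$ is exactly the sequence of information sets of $\mc I_i^{\mc O}$ met along the path $\emptyset\sqsubset\dots\sqsubset h$, with consecutive repetitions collapsed and $i$'s own actions inserted at the steps where $i$ acts. Two preliminary remarks: (i) whenever $i$ acts the observation history strictly grows, so an information set of $\mc I_i^{\mc O}$ \emph{changes} exactly at that step, hence $i$'s action is recorded in $\tau_i$ and is moreover pinned down by the information set it leads to; (ii) the last entry of $\tau_i(h)$ recovers $I_i^{\mc O}(h)$, so $\mc I^{\mc O^\star}$ always refines $\mc I^{\mc O}$. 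By (ii), \ref{it:stab} is equivalent to the reverse refinement, i.e. to
\[
(\star)\qquad I_i^{\mc O}(g)=I_i^{\mc O}(h)\ \Longrightarrow\ \tau_i(g)=\tau_i(h)\qquad(\forall i,\ \forall g,h\in\mc H),
\]
``a player's current information set determines its whole trace''. I will prove \ref{it:tip}$\iff(\star)$ and then that $(\star)$ holds if and only if both \ref{it:wb} holds and every player $i$ can deduce $\mc O_i$.

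For \ref{it:tip}$\iff(\star)$ I first fix the reading of ``$\mc I_i^{\mc O}$ forms a tree'': the information-set digraph of $i$ (vertices $\mc I_i^{\mc O}$, an edge $I\to J$ whenever $I_i^{\mc O}(g)=I\neq J=I_i^{\mc O}(ga)$ for some $g,a$) is a rooted tree, equivalently every information set has a unique immediate predecessor or is the root, equivalently $\sqsubset$ is a partial order in which every down-set $\{J\mid J\sqsubset I\}$ is a finite chain with least element $I_i^{\mc O}(\emptyset)$. Given $(\star)$, for $h\in I$ the down-set $\{J\mid J\sqsubset I\}$ equals the set of information sets occurring in $\tau_i(h)$ (inclusion $\supseteq$ is immediate; for $\subseteq$, if $J\sqsubset I$ pick $a\in J$ and $b\in I$ with $a\sqsubset b$ and use $\tau_i(b)=\tau_i(h)$), which is a chain independent of the chosen $h\in I$; antisymmetry and transitivity of $\sqsubset$ then follow by transporting a prefix along a history of the common cell together with a length count on observation histories. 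Conversely, if the digraph is a tree then the predecessor chain of $I$ is unique, and an induction on its length --- using remark (i) at the steps where $i$ acts --- shows every $h\in I$ must realise precisely that chain as $\tau_i(h)$, which is $(\star)$.

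For the second equivalence, in the direction ``$(\star)$ follows from \ref{it:wb} and deducibility'' I induct on $|\vec O_i(h)|$ (which equals $|\vec O_i(g)|$ when $g,h$ share a cell): deducibility of $\mc O_i$ lets me strip the common last raw observation $O_i(g)=O_i(h)$ off both $\vec O_i(g)$ and $\vec O_i(h)$ and see that the truncations lie in a common cell, \ref{it:wb} guarantees that ``is it $i$'s turn'' --- hence whether the next entry of $\tau_i$ is an own action --- agrees on both sides, and the inductive hypothesis closes the induction. In the direction ``$(\star)$ implies \ref{it:wb} and deducibility'' I argue contrapositively: a failure of \ref{it:wb} (two members $g,h$ of one information set of $i$ and an action $a$ with $ga\in\mc H_i$, $ha\notin\mc H_i$) yields, after steering the two continuations back into a common information set of $i$, two histories of a shared cell whose traces differ by an inserted own action; and a failure of deducibility at an offending cell is turned into a trace mismatch by comparing the two ``parsings'' of the common observation history there.

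The step I expect to be hardest is this last $\Rightarrow$ direction, and within it the ``mixed'' configuration in which at $g'\in\mc H_i$ an own action gets recorded while at $h'\notin\mc H_i$ it does not, yet the two resulting histories land in the same cell: turning this into an honest violation of $(\star)$ requires a careful argument that the two continuations can be brought back into a common information set of $i$ with the ``who acted in between'' discrepancy surviving in the trace. Before the argument is watertight I would also pin down a couple of conventions: that \ref{it:wb} is read here with $\mc I_i^{\mc O}|_{\mc H_i}$ in place of $\mc I_i^\textnormal{cl}$; that ``$i$ can deduce $\mc O_i$'' is meant up to observational redundancy (equivalently, after replacing $\mc O$ by the canonical $\mc O^\star$), since a deliberately redundant $\mc O$ can satisfy \ref{it:stab} while $\mc O_i$ itself is not literally a function of $\mc I_i^{\mc O}$; and whether \ref{it:sep}/\ref{it:cons} are tacitly in force. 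I would check these points against the small examples of Figures~\ref{fig:no_finest_partition} and~\ref{fig:sneaking_game_modification} before finalising.
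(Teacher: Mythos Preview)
Your approach---routing both equivalences through the trace condition $(\star)$---is essentially the paper's, which argues (more tersely) that ``observation sequences that once diverged never become the same.'' Your caveat that ``$i$ can deduce $\mc O_i$'' must be read up to observational redundancy is well-placed and in fact necessary for the second biconditional: take $\emptyset\to\alpha\to\alpha\beta$ with $O_i(\alpha)=\{x\}$ and $O_i(\emptyset)=O_i(\alpha\beta)=\emptyset$, plus a second child $\gamma$ of the root with $O_i(\gamma)=\{x\}$; then $\alpha\beta$ and $\gamma$ share an information set, \ref{it:stab} and \ref{it:wb} both hold, yet $O_i(\alpha\beta)\neq O_i(\gamma)$. (The paper's own proof of this half in fact argues the implication from \ref{it:wb}-plus-deducibility towards \ref{it:tip} in \emph{both} of its ``directions'', so the literal converse is never established there either.)

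There is, however, a genuine gap in your plan: the three ``equivalent'' readings of \ref{it:tip} you list are not equivalent. The digraph-is-a-rooted-tree reading is strictly stronger than ``$\sqsubset$ is a partial order with chain down-sets'', and only the former is equivalent to $(\star)$. In the sequence variant, let $\emptyset$ and $\alpha$ have empty observation, $\alpha\beta$ have $O_i(\alpha\beta)=(x,y)$, and $\alpha\gamma$, $\alpha\gamma\delta$ have $O_i(\alpha\gamma)=(x)$, $O_i(\alpha\gamma\delta)=(y)$. The information sets are $I_1=\{\emptyset,\alpha\}$, $I_2=\{\alpha\gamma\}$, $I_3=\{\alpha\beta,\alpha\gamma\delta\}$; the induced $\sqsubset$ is the chain $I_1\sqsubset I_2\sqsubset I_3$, so the paper's stated \ref{it:tip} holds, yet $\tau_i(\alpha\beta)=(I_1,I_3)\neq(I_1,I_2,I_3)=\tau_i(\alpha\gamma\delta)$, so $(\star)$ and \ref{it:stab} fail. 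Your argument for \ref{it:tip}$\Rightarrow(\star)$ (``the predecessor chain of $I$ is unique, so every $h\in I$ realises precisely that chain'') tacitly uses the digraph reading: under the partial-order reading the predecessor chain is indeed unique, but a given $h\in I$ may \emph{skip} elements of it. If you commit to the digraph reading---which is also what the paper's proof implicitly relies on---the rest of your plan goes through.
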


\noindent In particular, if $M$ satisfies \ref{it:wb}, then any  $\mc O_{\{\}}$ in $M$ satisfies \ref{it:stab}. This isn't the case for $\mc O_{()}$: While each player $i$ can, by definition of $\mc I^{\mc O}$, deduce $\vec{\mc O}_{(i)}$, he might not always be able to deduce $\mc O_{(i)}$. Apart from the above lemma, this is illustrated by, for example, the sneaking game (Figure~\ref{fig:sneaking_game_modification}).

\begin{proof}
\ref{it:tip} implies that any two histories in the same infoset have the same observation sequences, which immediately gives \ref{it:stab}. Conversely, the only way the stability condition can fail is when some two histories from the same infoset have different $\vec O_i(\cdot)$-s, which contradicts \ref{it:tip}.

Similarly, if \ref{it:tip} doesn't hold, then, for some player $i$, there must exist two different $\vec O_i(\cdot)$-s which, upon transitioning to successor histories, become identical. This can only happen if either $i$ cannot deduce $\mc O_i$ at the successor infoset, or if one transition adds an action and the other doesn't -- in other words, if \ref{it:wb} doesn't hold.
Conversely, \ref{it:wb} and being able to tell the most-recent observation imply that observation sequences that once diverged will never become the same, which yields \ref{it:tip}.
\end{proof}

As mentioned above, the obvious solution to the problem of stability is ``just use $\mc O_{\{\}}$''. However, there is an alternative solution related to Lemma~\ref{lem:consist_domain}.
By adding dummy chance nodes, each domain can be modified in such a way that it satisfies \ref{it:stab} even with $\mc O_{()}$. 
An example of how this can be done in practice is in Figure~\ref{fig:stable_modif}.
These nodes can be interpreted as a sequence of auxiliary nodes where the game-engine (or referee) processes the consequences of the latest action and optionally sends elementary observations to players.

\begin{figure}[htb]
\centering
\setlength{\nodesize}{2em}
\begin{tikzpicture}
\noFinestPartitionGeneral
\noFinestPartitionInfosetsA
\end{tikzpicture}
\begin{tikzpicture}
\noFinestPartitionArrow
\end{tikzpicture}
\begin{tikzpicture}
\noFinestPartitionGeneral
\noFinestPartitionInfosetsB
\end{tikzpicture}\caption{A domain with unstable observations (left) and the corresponding stable modification (right). Maximizer's observations are next to each history.}
\label{fig:stable_modif}
\end{figure}
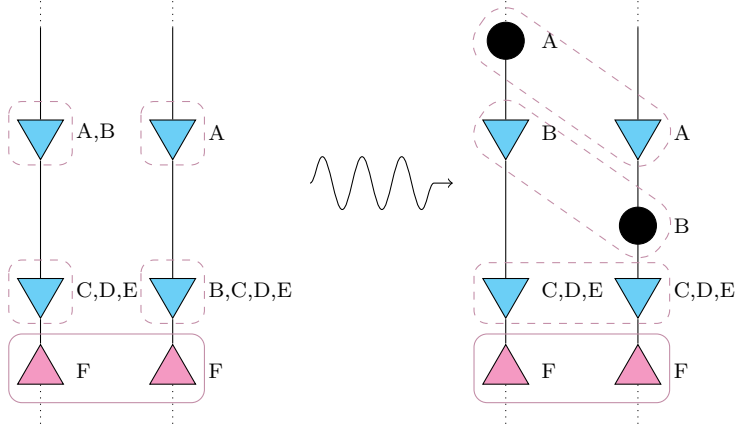

\begin{lemma}[Stable modification]\label{lem:modif_M}
For each $\left< M, \mc O \right>$ there exists a modification $\left< \widetilde M, \widetilde{\mc  O} \right>$ such that the original structure of $\mc I^{\mc O}$ is preserved, $\widetilde{\mc O}_{()}$ satisfies \ref{it:stab}, and we have $|\widetilde{\mc H}| \leq |\mc H| + |\mc H||\mc I^{\mc O}|$.
\end{lemma}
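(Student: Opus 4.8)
The plan is to reduce the statement to Lemma~\ref{lem:stab}. By that lemma, \ref{it:stab} for $\widetilde{\mc O}_{()}$ is equivalent to \ref{it:tip}, and also to the conjunction of \ref{it:wb} and ``every player $i$ can deduce $\widetilde{\mc O}_i$''. So it suffices to produce $\langle \widetilde M, \widetilde{\mc O}\rangle$ that (i) satisfies \ref{it:wb}, (ii) lets every player deduce his current observation, (iii) induces the ``same'' information partitions as $\mc I^{\mc O}$ on the un-modified part of the tree, and (iv) is not too big. First I would build $\widetilde M$ from $M$ purely by inserting \emph{dummy chance nodes}, each carrying a single ``noop'' action (so no strategic choice is added and $\sigma_c$ extends in the only possible way): on every edge $h'\xrightarrow{\,a\,}h$ of $\mc H$ I would splice in a short chain $h'\xrightarrow{\,a\,}d^{\,h}_1\to\cdots\to d^{\,h}_{k_h}\to h$ of fresh chance nodes, one node $d^{\,h}_j$ for each player that ``does something'' when $h$ is entered (receives a non-empty observation in $\mc O$, or has its $\mc I^{\mc O}_i$-class changed), processed in a fixed order. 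This has the immediate benefit that in $\widetilde M$ every decision node of every player is the child of a chance node, so ``$a$ sometimes leads directly to $\mc H_i$ and sometimes not'' can never happen and \ref{it:wb} holds vacuously.

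Next I would define $\widetilde{\mc O}$ by \emph{spreading the observations out along these chains}, as in the ``game engine processes the latest action'' reading suggested after Figure~\ref{fig:stable_modif}: at the node $d^{\,h}_j$ dedicated to a player $i$, only $i$ receives information, and what $i$ receives is his chunk from $h$ together with a self-delimiting marker (for instance, the identity of $i$'s freshly-entered information set, or a step stamp). The role of the dummy nodes is precisely to make each $\vec{\widetilde O}_i$ \emph{unambiguously parseable}: in the original $M$ two observation histories could ``re-synchronise'' because a long observation chunk absorbed an earlier divergence (which is exactly what happens for $\mc O^\textnormal{iso+cl}$ in the sneaking game of Figure~\ref{fig:sneaking_game_modification}), but once each chunk is announced at its own node with its own marker, a divergence in $\vec{\widetilde O}_i$ at some depth persists at all greater depths. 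Making this rigorous is the main work: one proves by induction on depth in $\widetilde{\mc H}$ that $\vec{\widetilde O}_i(g)=\vec{\widetilde O}_i(h)$ forces $\widetilde O_i(g)=\widetilde O_i(h)$ — using that then the parents $g',h'$ already satisfy $\vec{\widetilde O}_i(g')=\vec{\widetilde O}_i(h')$ (the induction hypothesis, i.e.\ \ref{it:tip} one level down) together with \ref{it:wb} — which is exactly ``$i$ can deduce $\widetilde{\mc O}_i$''. By Lemma~\ref{lem:stab} this gives \ref{it:stab} for $\widetilde{\mc O}_{()}$. For (iii) I would observe that along the original histories $\vec{\widetilde O}_i$ carries the same information as $\vec O_i$, so contracting each dummy node onto its unique child yields an order-preserving bijection between $\mc I^{\widetilde{\mc O}}_i$ and $\mc I^{\mc O}_i$; the markers must be coarse enough not to over-refine $\mc I^{\mc O}$, which is the delicate part.

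Finally, the size bound is bookkeeping: the number of inserted nodes is $\sum_{h\in\mc H\setminus\{\emptyset\}}k_h$, each $k_h$ is at most the number of players that do something at $h$ and hence at most $N$, and since every player owns at least one information set $N\le|\mc I^{\mc O}|$; thus $|\widetilde{\mc H}|\le|\mc H|+|\mc H|\,|\mc I^{\mc O}|$ (one can also charge each new node to the distinct pair formed by a history of $\mc H$ and the information set of $\mc I^{\mc O}$ entered there, for the same estimate). I expect the genuine obstacle to be the simultaneous fulfilment of (i)--(iii): choosing the dummy-node placement and the observation markers so that \ref{it:wb} and deducibility both hold while $\mc I^{\mc O}$ is faithfully reproduced rather than refined — and, relatedly, pinning down a precise meaning for ``the original structure of $\mc I^{\mc O}$ is preserved'', since a naive stamping or separator scheme tends to split infosets whose members sit at different depths, which would then have to be repaired by further depth-equalising padding.
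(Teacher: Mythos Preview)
Your construction differs from the paper's in a way that matters, and the gap shows up exactly on the worst-case example of Figure~\ref{fig:stable_modif_worst_case}. The paper does \emph{not} add one dummy per player; it splits each compound observation $O_i(h)=(o_1,\dots,o_k)$ into $k$ separate nodes, each emitting a single elementary observation, and then identifies those new nodes whose observation-history prefix coincides with an original $\vec O(h)$ (formally: embed $\mc H$ into the space of sequences over $\mf O$). The size bound $|\mc H|+|\mc H|\,|\mc I^{\mc O}|$ then comes from the fact that the surviving intermediate nodes are indexed by pairs (original edge, original observation history), and there are at most $|\mc I^{\mc O}|$ of the latter.

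Your ``one dummy per active player, deliver the whole chunk plus a marker'' scheme cannot simultaneously achieve \ref{it:stab} and preserve $\mc I^{\mc O}$ in that example. There is effectively one player, so you insert at most one dummy per edge and your bound collapses to $|\widetilde{\mc H}|\le 2|\mc H|$ --- linear --- whereas the paper states (and the figure illustrates) that the smallest stable modification is quadratic. Concretely: at the leaves $a,b,\dots,z$ the original $\vec O_{(1)}$ is the \emph{same} sequence $(A,B,\textnormal{action},0,1,\dots,N)$, obtained by concatenating chunks of \emph{different} lengths along different branches. Any marker you insert at your single dummy node lands at a different position in the sequence on each branch (after ``$0$'' on the $A$-branch, after ``$0,1$'' on the $B$-branch, etc.), so $\vec{\widetilde O}_{(1)}$ now distinguishes $a,b,\dots,z$ and the information set is shattered. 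Your closing diagnosis (``members sit at different depths'') is not the real obstruction here --- $a,b,\dots,z$ are all at the same depth; what varies is how the elementary observations are apportioned between the intermediate nodes. Fixing this forces exactly the per-elementary-observation splitting the paper uses, and with it the quadratic blow-up.

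Separately, your route to the size bound via $k_h\le N\le|\mc I^{\mc O}|$ is therefore too optimistic: it proves a bound your construction does not actually attain once it is corrected, and the Figure~\ref{fig:stable_modif_worst_case} example shows the quadratic term is genuinely needed.
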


\begin{proof}[Proof sketch]\hspace{-0.25em}\footnote{Given our preference to replace EFGs by the model from \cite{FOG}, this proof is only presented as proof sketche, and the properties of the observation-based model in Section~\ref{sec:augm_mod_def} are only formulated as a conjecture.}
Note that typically, the increase in size will be much smaller than the worst-case bound. However, the bound is asymptotically tight, as illustrated in Figure~\ref{fig:stable_modif_worst_case}.
The idea is that in the most pessimistic scenario, we can split each $h$ that generates, say, an observation $O_i(h) = (a,b,c)$ into nodes $h_1$, $h_2$, $h_3$ with $O_i(h_1) = a$, $O_i(h_2)=b$, $O_i(h_3)=c$.
Since the length of observations is independent of the size of $M$, this leads to a potentially unbounded increase in size.
However, we can remove those new states $\tilde h$ where $\vec O(\tilde h)$ doesn't correspond to any original observation history $\vec O(h)$, $h\in \mc H$ (and there are at most $| \mc I(\mc O)|$ of those).
This proof-sketch can be formalized by embedding $\mc H$ into the space of sequences in $\mf O$.
\end{proof}

\begin{figure}[htb]
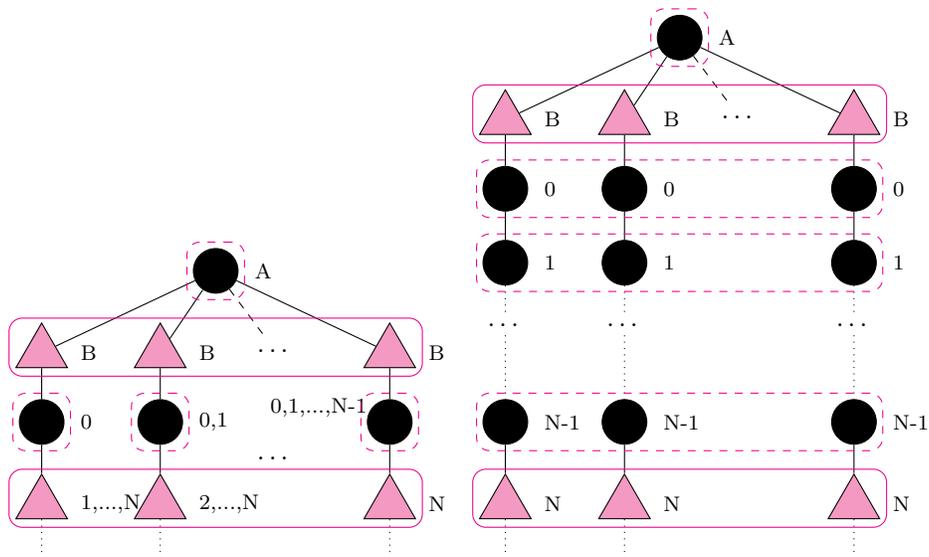

\centering
\setlength{\nodesize}{2.25em}
\paddingBefore
\paddingAfter
\caption{A domain with unstable observations (left) where the smallest stable modification (right) increases the size of $M$ from linear ($3N+1$) to quadratic ($N^2 + N +1$). A text next to a node denotes the observation received there.}
\label{fig:stable_modif_worst_case}
\end{figure}

\subsection{The Proposed Definition of the Observation-Based Model}\label{sec:augm_mod_def}

\begin{definition}[Observation-based model]\label{def:obs_model}
The pair $G = \left< M, \mc O \right>$ is said to be an \emph{observation-based model} (of an imperfect information game) if $M$ satisfies \ref{it:wb}, $\mc O$ is an observation function in $M$, and each player $i$ can deduce
\begin{itemize}
\item his own observations \hfill (the observation function ``feature'' $\mc O_i : h \mapsto O_i(h)$),
\item whether he is the acting player \hfill (the player function $P$ on $\mc H_i$ or, equivalently, the feature $\mf{1}_{\mc H_i}$),
\item his own available actions \hfill (the ``feature'' $h\in\mc H \mapsto \mc A(h) \in \mc P(\mc A)$ on $\mc H_i$),
\item whether the game has ended \hfill (the feature $\mf{1}_{\mc Z}$)
\item and his own utility \hfill (the utility function $u_i : \mc Z \to \R$).
\end{itemize}
\end{definition}

Whether an observation-based model satisfies \ref{it:io} depends on whether the choice of $\mc O$ is suitable for the specific real-life problem $P$ that $\left< M, \mc O \right>$ is supposed to model. However, all of the remaining technical desiderata from Section~\ref{sec:observations} are satisfied:

\begin{conjecture}[Observation-based model gives a well-defined game]\label{prop:obs_model_is_game}
Any observation-based model $\left< M, \mc O \right>$ has the following properties, independently of whether the set or sequence approach is adopted:
\begin{enumerate}[label=(\roman*)]
\item $\left< M, \mc O \right>$ satisfies \ref{it:sep}, \ref{it:tip} and \ref{it:stab};
\item $\mc O' := \mc O + \mc O^\textnormal{iso}$ satisfies \ref{it:iso} while being equivalent to $\mc O$ (i.e. we have $\mc I^{\mc O'} = \mc I^{\mc O}$);
\item The definition $\mc I'_i := \{ I_i \cap \mc H_i \, | \ I_i \in \mc I_i^{\mc O} \} \setminus \{\emptyset\}$ produces a well-defined classical imperfect information game $\left< M, \mc I' \right>$ with perfect recall. Moreover, $\left< M, \mc O \right>$ satisfies \ref{it:noCH} and \ref{it:cons} with respect to $\left< M, \mc I' \right>$.
\end{enumerate}
\end{conjecture}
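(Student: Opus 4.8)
The plan is to obtain (i) and (ii) quickly from results already in hand and to concentrate the work on the perfect-recall claim in (iii). For (i): by Definition~\ref{def:obs_model} an observation-based model satisfies \ref{it:wb} and each player $i$ can deduce $\mc O_i$; Lemma~\ref{lem:stab} says precisely that this conjunction is equivalent to \ref{it:tip} and to \ref{it:stab}, so both of those hold. For \ref{it:sep}, recall that $i$ can deduce the feature $\mf 1_{\mc H_i}$ (``whether he is the acting player''); unwinding the definition of deduction, for every $h$ and every $g\in I_i^{\mc O}(h)$ we get $\mf 1_{\mc H_i}(g)=\mf 1_{\mc H_i}(h)$, so $h\in\mc H_i$ forces $I_i^{\mc O}(h)\subseteq\mc H_i$. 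Hence no class of $\mc I_i^{\mc O}$ meets both $\mc H_i$ and $\mc H_{-i}$, which is exactly \ref{it:sep}.

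\textbf{Part (ii).} Property \ref{it:iso} holds for $\mc O'=\mc O+\mc O^{\textnormal{iso}}$ essentially by the design of $\mc O^{\textnormal{iso}}$: when $h=h'a$ with $h'\in\mc H_i$, the symbol $a$ lies in $O'_i(h)$, so $i$ observes his latest action at the very node it leads to. For $\mc I^{\mc O'}=\mc I^{\mc O}$, one inclusion is the general principle that appending symbols to observations can only refine the induced partition. For the converse, compare the two observation-history recursions: $\vec O'_i(h)$ is obtained from $\vec O_i(h)$ by inserting, immediately after each action symbol $a\in\mc A$ and the block of observation symbols following it, one extra copy of $a$. Since $\vec O_i(\cdot)$ already bookkeeps $i$'s own actions, this is a well-defined injection on sequences --- provided actions and elementary observations can be told apart (say $\mc A\cap\mf O=\emptyset$, or the two alphabets are tagged) --- so $\vec O_i(g)=\vec O_i(h)\iff\vec O'_i(g)=\vec O'_i(h)$ and the two partitions coincide. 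If desired, one also checks that $\langle M,\mc O'\rangle$ is again an observation-based model, which is immediate from $\mc I^{\mc O'}=\mc I^{\mc O}$.

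\textbf{Part (iii): the routine parts, then perfect recall.} Since $\mc I_i^{\mc O}$ partitions $\mc H$, the collection $\mc I'_i$ partitions $\mc H_i$; each of its blocks has a constant action set because $i$ can deduce the feature $h\mapsto\mc A(h)$ on $\mc H_i$, so $\langle M,\mc I'\rangle$ is a legal classical game. Property \ref{it:noCH} is vacuous (we never modify $M$), and \ref{it:cons} with respect to $\langle M,\mc I'\rangle$ holds because the identity $\mc I_i^{\mc O}|_{\mc H_i}=\mc I'_i$ is precisely the definition of $\mc I'_i$. The substantive point is perfect recall, which by Definition~\ref{def:perf_rec} means $\mc O^{\mc I'}$ satisfies \ref{it:cons} with respect to $\mc I'$, i.e. $\mc I^{\mc O^{\mc I'}}_i|_{\mc H_i}=\mc I'_i$. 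Writing $\mc O^{\mc I'}$ via the construction of Example~\ref{ex:obs_I}, the history $\vec O^{\mc I'}_i(h)$ records the deduplicated sequence of elements of $\mc I'_i$ that the root-to-$h$ path passes through, interleaved with $i$'s actions, and it ends with $I'_i(h)$; this last fact gives the easy inclusion that $\mc I^{\mc O^{\mc I'}}_i|_{\mc H_i}$ refines $\mc I'_i$. For the reverse, take $g,h\in\mc H_i$ with $I_i^{\mc O}(g)=I_i^{\mc O}(h)=:I$. They took the same sequence of actions (it is the $\mc A$-subsequence of $\vec O_i(g)=\vec O_i(h)$, using the disjointness above), and --- here \ref{it:tip} enters --- they passed through the same sequence of $\mc I'_i$-classes: under \ref{it:tip} the $\mc I_i^{\mc O}$-classes visited along a root-to-$h$ path form a chain from the root-class up to $I$, and in a tree such a chain is forced to be the unique ancestor chain of $I$, hence is the same for every $h\in I$; restricting it to the $i$-classes yields the claim. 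Therefore $\vec O^{\mc I'}_i(g)=\vec O^{\mc I'}_i(h)$, which is the reverse inclusion, and perfect recall follows.

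\textbf{The main obstacle.} The delicate step is the last one: one must verify that the chain of $\mc I_i^{\mc O}$-classes \emph{actually} visited by the path to $h$ is the whole ancestor chain of $I=I_i^{\mc O}(h)$ --- that the path meets a representative of every class below $I$, and that a single action step cannot jump over an intermediate class on that chain. This is true, but the argument relies on \ref{it:wb} together with the distinguishability of actions and observations, which jointly guarantee that the observation history along the path takes the value $\vec O_i(u)$ for every ancestor $u$ of $h$ (so no class is skipped ``mid-step''); this is exactly the sort of bookkeeping that leads the authors to state the result as a conjecture rather than a theorem. An alternative route to perfect recall uses \ref{it:stab} directly: since $\mc I^{\mc O}=\mc I^{\mc O^\star}$ for $\mc O^\star:=\mc O^{\mc I^{\mc O}}$, and $\mc O^{\mc I'}$ differs from $\mc O^\star$ only by omitting the classes outside $\mc H_i$ (which, by \ref{it:sep}, cannot separate two histories of $\mc H_i$ any more finely than the $i$-classes already do), one reduces the claim to showing that $\mc O^{\mc I'}$ and $\mc O^\star$ induce the same partition on $\mc H_i$ --- but this reduction runs into the same tree-combinatorial point and is not obviously shorter.
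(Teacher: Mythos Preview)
The paper does not prove this statement --- it is explicitly left as a conjecture (see the footnote to Lemma~\ref{lem:modif_M}) --- so there is nothing to compare against, and your proposal must stand on its own. Parts (i) and (ii) are fine: the appeal to Lemma~\ref{lem:stab} and to deducibility of $\mathbf 1_{\mathcal H_i}$ is exactly right, and the disjointness assumption $\mathcal A\cap\mathbf O=\emptyset$ (or the type distinction $\mathcal A$ vs.\ $\mathcal P(\mathbf O)$ in the set variant) is harmless and implicit in the paper.

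In part (iii) there is a genuine gap in the justification, though the conclusion is correct and the fix is simpler than your ``main obstacle'' paragraph suggests. Your argument for the reverse inclusion is: under \ref{it:tip} the $\mathcal I_i^{\mathcal O}$-classes visited along the root-to-$h$ path form a chain up to $I$, and ``in a tree such a chain is forced to be the unique ancestor chain of $I$''. This last step is false as stated: a chain from the root-class to $I$ in a tree is only a \emph{sub}chain of the maximal one, and two paths could a priori skip different intermediate classes. You recognise this yourself, but then try to repair it by arguing that \emph{no} class is skipped, which is the hard route. The claim you actually need is weaker --- only the $\mathcal I'_i$-classes (those contained in $\mathcal H_i$) must match --- and this follows directly from $\vec O_i(h)$ without \ref{it:tip} or \ref{it:wb}: the nodes of $\mathcal H_i$ on the root-to-$h$ path are exactly the places where $i$ acted, and the $\vec O_i$-value at each such node is the prefix of $\vec O_i(h)$ ending just before the corresponding action symbol (this is immediate from the recursion defining $\vec O_i$, using the ability to locate action symbols). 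Hence the sequence of $\mathcal I'_i$-classes visited, interleaved with the actions, is a function of $\vec O_i(h)$ alone, so $\vec O_i(g)=\vec O_i(h)$ gives $\vec O^{\mathcal I'}_i(g)=\vec O^{\mathcal I'}_i(h)$ directly.
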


As for how an observation-based model can be constructed in practice, consider the following examples:
adding ``Your turn!'' to $O_i(h)$ for $h\in \mc H_i$ ensures that $i$ can tell whether he is the acting player and adding the pair $(\textnormal{``Your utility''}, \, u_i(z) )$ to $O_i(z)$ for $z\in \mc Z$ ensures that $i$ can tell his own utility.
As a side effect, this automatically lets the player deduce whether the game has ended.

It is not always necessary to specify features explicitly --- since the players have access to $M$, it would suffice to for example add $(\textnormal{``Game ended at terminal node''}, \, z)$ to $O_i(z)$ for $z\in \mc Z$, and then $i$ can deduce the necessary utility himself.
However, if we want to ensure that a feature $f$ can be deduced at $H\subset \mc H$, we can always do the following:
\begin{enumerate}
\item add the pair $(f,f(h))$ to $O_i(h)$ upon first encountering a node $h$ from $H\cap \textnormal{dom}(f)$,
\item add the pair $(f,f(h))$ whenever the current node $h$ belongs to $H\cap \textnormal{dom}(f)$ and $f(h)$ differs from the value last encountered in $H \cap \textnormal{dom}(f)$, and
\item not add any ``false'' observations, i.e. if $g\in H$ satisfies $f(h)\neq f(g)$, then $(f,f(h)) \notin O_i(g)$.
\end{enumerate}

\noindent An alternative is to repeat the observation $(f,f(h))$ for every $h\in H \cap \textnormal{dom}(f)$ (cf. the Frame problem \cite{frameProblem_wiki}).


Definition~\ref{def:obs_model} lists some features that players can deduce in every imperfect information game.
Actually -- as shown by Remark~\ref{rem:non-ded_feats} -- Definition~\ref{def:obs_model} summarizes \emph{all} of the ``standard''\footnote{The authors realize that ``standard feature'' is not a well-defined notion. We apologize to any reader who finds our list incomplete.} features that the players can be allowed to deduce, because for each standard $f$ not listed in Definition~\ref{def:obs_model}, there is a counterexample domain where players cannot be consistently allowed to deduce $f$.
This is, in fact, good news, since our formalism allows us to say, for example, ``consider an observation-based model where each player can deduce X'', and thus elegantly avoid a specific class of pathological domains.

\begin{figure}[htb]
\centering
\begin{tikzpicture}
\setlength{\nodesize}{2em}
\thickInfosets
\end{tikzpicture}
\caption{An example of a game where augmented information sets of the maximizer (dashed) necessarily have to be ``thick'' in order to be both stable and consistent with the classical information sets.}
\label{fig:thick_infosets}
\end{figure}
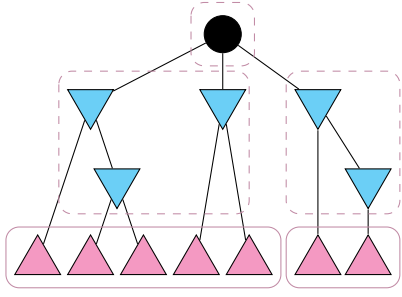

\begin{remark}[Non-deductible features]\label{rem:non-ded_feats}
For some classical imperfect information games $\left< M^\textnormal{cl}, \mc I^\textnormal{cl} \right>$ there is no observation-based model satisfying \ref{it:noCH} and \ref{it:cons} for which the players can deduce
\begin{enumerate}[label=(\roman*)]
\item the player function on the whole $\mc H \setminus \mc Z$ \textnormal{(e.g. when the opponent might, or might not, lose control of his actions and be replaced by chance)},
\item available actions on the whole $\mc H \setminus \mc Z$, \textnormal{i.e. those of the opponent or chance (e.g. most card games)},
\item history length $|h|$ \textnormal{(e.g. the game from Figure~\ref{fig:thick_infosets} or phantom variants of perfect information games, when the ``illegal move'' observations are private; in such games, information sets not belonging to the acting player might contain some $h$ together with its parent $h'$)}.
\end{enumerate}
\end{remark}

The following example shows a canonical way of extending a classical model into a consistent observation-based model.
This approach is domain-agnostic, but this comes at the cost of \ref{it:io}.
The new model can be further refined in order to get \ref{it:io}, but such a modification necessarily has to be domain specific (as noted in Example~\ref{ex:obs_are_dom_spec} and \ref{ex:no_finest_partit}).

\begin{example}[Coarse observation-based model]\label{ex:coarse_model}
For every classical model $\left< M^\textnormal{cl}, \mc I^\textnormal{cl} \right>$ of $P$, there exists $\left< M, \mc O \right>$ that satisfies \ref{it:cons}, \ref{it:sep}, \ref{it:iso} and \ref{it:stab}, but might fail \ref{it:noCH} and \ref{it:io}.
\end{example}

\noindent For a fixed $\left< M^\textnormal{cl}, \mc I^\textnormal{cl} \right>$, we first construct $M \supset M^\textnormal{cl}$ by adding dummy nodes (chance nodes with a single noop action) after some of the player nodes, in such a way that \ref{it:wb} holds.
We then define $O_i(h)$ as the observation that contains the elements enforced by the following conditions, and no others.
\begin{itemize}
\item $O_i(h) \ni (\textnormal{``Current information set: '', } I) \iff h \in \mc I^\textnormal{cl}_i, \ I = \mc I^\textnormal{cl}_i(h)$
\item $O_i(h) \ni (\textnormal{``Available actions: '', } \mc A(h)) \iff h \in \mc H_i$
\item $O_i(h) \ni (\textnormal{``Your last action: '', } a) \iff h = h'a, \ h'\in \mc H_i, \ a \in \mc A(h')$
\item $O_i(z) \ni (\textnormal{``Game ended. Utility gained: '', } u_i(z)) \iff z \in \mc Z$
\end{itemize}
With the exception of the ``$i$ can tell $\mc O_i$'' condition, it is straightforward to verify that $\left< M, \mc O \right>$ satisfies the definition of an observation-based model. The ``$i$ can tell $\mc O_i$'' condition follows from \ref{it:wb}.
It remains to show that it is consistent with $\left< M^\textnormal{cl}, \mc I^\textnormal{cl} \right>$, i.e. that $\mc I' = \mc I^\textnormal{cl}$.
This can be proven along the lines of ``isn't consistent $\implies$ fails \ref{it:wb}''.
    \section{Conclusion}

We have pointed out several issues with the standard EFG model (Figures~\ref{fig:poker} and \ref{fig:coarse_infosets_fail_io}), as well as examples where augmented information sets necessarily behave undesirably or against our intuition (Figures~\ref{fig:sneaking_game_modification}, \ref{fig:thick_infosets} and \ref{fig:unfair_mp}).
We described a model that is similar to EFG but uses observations to introduce imperfect information into an originally fully-observable environment.
Any instance of this observation-based model can be turned into a compatible EFG (Conjecture~\ref{prop:obs_model_is_game}), and the converse is also true if one is willing to modify the underlying perfect-information model and observe some information late (Example~\ref{ex:coarse_model}).
We have also shown that if one isn't willing to make these sacrifices, the task is impossible (Corollary~\ref{cor:impos} and Example~\ref{ex:no_finest_partit}).

The EFG model has not been introduced with online play, decomposition, or observations in mind.
After reviewing the difficulties that come up when attempting to extend EFGs by adding observations and staying consistent with the original model, we ultimately concluded that it is more productive to work in a model which contains observations natively.
Therefore, we believe that for investigating sequential decision-making in multiagent settings with imperfect information, researchers should adopt a model based on partially-observable stochastic games (POSGs, see, e.g., \cite{hansen2004dynamic}), such as the Factored-Observation Game model described in \cite{FOG}.
We predict this will benefit the researchers who are now working in the EFG formalism since POSGs will make decomposition easier, and simplify the transfer of ideas between game theory and multiagent RL (MARL) by using language with which the (more numerous) MARL community is more familiar. Moreover, POSG-based models offer several advantages over EFGs, such as compact representation of cyclic state-spaces, the ability to give feedback in non-terminal states, and natural modeling of simultaneous decisions.

\subsection*{Acknowledgments}

We are grateful to Tomas Gavenciak for discussions related to an earlier version of this paper.
This work was supported by Czech science foundation grant no. 18-27483Y and RCI  CZ.02.1.01/0.0/0.0/16 019/0000765.

    \bibliographystyle{alphanum}
    \bibliography{refs}

    \appendix
%
%
%
%

\section{Bonus Exercise}\label{sec:exercise}

\begin{figure}[htb]
\centering
\begin{tikzpicture}
\unfairMP
\unfairMPtext
\end{tikzpicture}
\caption{A classical model for the unfair variant of Matching Pennies.}
\label{fig:unfair_mp}
\end{figure}
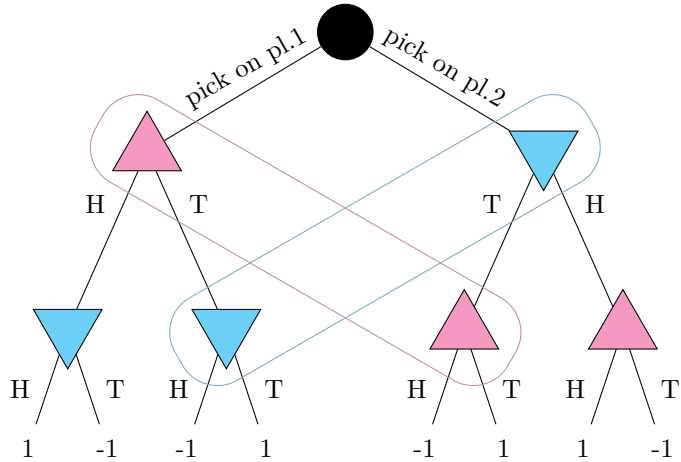

As an example illustrating the difficulties that one can encounter when defining augmented information sets and public states, we include a ``case study'' exercise.
Consider the game defined by the following description:

\begin{example}\label{ex:unfair_mp}
Unfair Matching Pennies.
\end{example}

\noindent Two players are playing a real-life game of Matching Pennies for the prize of 10\,000 dollars. To ensure that none of them can cheat, they use a third person as a referee. At the start of the game, each player is in a separate room. The referee randomly chooses the player who acts first, goes to him, and asks him for his action. He then goes to the second player and asks him for his action. By default, the players do not get to know whether they are first or second.
Now comes the twist: the referee really hates people who choose heads --- if the first player to choose picks heads, the referee will reveal this to the second player. (From some reason the players chose this referee even though they know this.)

\medskip

\begin{exercise}
What is the minimal classical model corresponding to the game from Example~\ref{ex:unfair_mp}? How do (augmented) information sets and public states look like in this $M^\textnormal{cl}$? Can players tell whether they have been betrayed by the referee or not? Is $M^\textnormal{cl}$ a good underlying model for an observation-based model of this game? If not, what is the minimal well-behaved modification $M$ of $M^\textnormal{cl}$?
\end{exercise}

\end{document}